%
%
\documentclass[onecolumn]{IEEEtran}
\usepackage{amsmath, amssymb, amscd, tikz,tikz-cd}
\usepackage{amsthm}
\usepackage{times,amssymb,amsmath,amsfonts,float,nicefrac,color,bbm,mathrsfs,caption,float}
\usepackage{algorithm,enumerate,graphicx}
\usepackage[font={small}]{caption}
\usepackage[mathscr]{eucal}
\usetikzlibrary{matrix}
\usetikzlibrary{shapes.misc, positioning}
\usepackage{sidecap}
\usepackage{algpseudocode}
\usepackage{verbatim}
\usepackage{epstopdf}
\usepackage{textcomp}
\usetikzlibrary{shapes,arrows}
\usepackage{stmaryrd}
\usepackage{mathabx}
\usepackage[noadjust]{cite}
\usepackage{booktabs}
\usepackage{enumitem}
\setlist[enumerate]{leftmargin=*}
\usepackage[normalem]{ulem}
\newcommand{\blue}[1]{{\color{black}#1}}

\textwidth6.5in\textheight9.1in\oddsidemargin=0.1in\evensidemargin=\oddsidemargin\topmargin=-.5in

\newcommand{\ff}{{\mathbb{F}}}
\newcommand{\PP}{{\mathbb{P}}}

\newcommand{\kk}{{\mathbbm k}}

\newtheorem{theorem}{Theorem}[section]

\newtheorem{proposition}[theorem]{Proposition}
\newtheorem{definition}[theorem]{Definition}
\newtheorem{example}[theorem]{Example}

\newcommand*{\QEDA}{\hfill\ensuremath{\blacksquare}}

\newtheorem{remark}[theorem]{Remark}

\newcommand{\RN}[1]{%
  \textup{\expandafter{\romannumeral#1}}%
}

\newcommand\remove[1]{}
\newcommand{\nc}{\newcommand}
\SetSymbolFont{stmry}{bold}{U}{stmry}{b}{n}

\allowdisplaybreaks
\nc\bfa{{\boldsymbol a}}\nc\bfA{{\boldsymbol A}}\nc\cA{{\mathcal A}}\nc\sA{{\mathscr A}}
\nc\bfb{{\boldsymbol b}}\nc\bfB{{\boldsymbol B}}\nc\cB{{\mathcal B}}\nc\sB{{\mathscr B}}
\nc\bfc{{\boldsymbol c}}\nc\bfC{{\boldsymbol C}}\nc\cC{{\mathcal C}}\nc\sC{{\mathscr C}}
\nc\bfd{{\boldsymbol d}}\nc\bfD{{\boldsymbol D}}\nc\cD{{\mathcal D}}\nc\sD{{\mathscr D}}
\nc\bfe{{\boldsymbol e}}\nc\bfE{{\boldsymbol E}}\nc\cE{{\mathcal E}}
\nc\bff{{\boldsymbol f}}\nc\bfF{{\boldsymbol F}}\nc\cF{{\mathcal F}}\nc\sF{{\mathscr F}}
\nc\bfg{{\boldsymbol g}}\nc\bfG{{\boldsymbol G}}\nc\cG{{\mathcal G}}
\nc\bfh{{\boldsymbol h}}\nc\bfH{{\boldsymbol H}}\nc\cH{{\mathcal H}}
\nc\bfi{{\boldsymbol i}}\nc\bfI{{\boldsymbol I}}\nc\cI{{\mathcal I}}\nc\sI{{\mathscr I}}
\nc\bfj{{\boldsymbol j}}\nc\bfJ{{\boldsymbol J}}\nc\cJ{{\mathcal J}}
\nc\bfk{{\boldsymbol k}}\nc\bfK{{\boldsymbol K}}\nc\cK{{\mathcal K}}
\nc\bfl{{\boldsymbol l}}\nc\bfL{{\boldsymbol L}}\nc\cL{{\mathcal L}}
\nc\bfm{{\boldsymbol m}}\nc\bfM{{\boldsymbol M}}\nc\cM{{\mathcal M}}
\nc\bfn{{\boldsymbol n}}\nc\bfN{{\boldsymbol N}}\nc\cN{{\mathcal N}}
\nc\bfo{{\boldsymbol o}}\nc\bfO{{\boldsymbol O}}\nc\cO{{\mathcal O}}
\nc\bfp{{\boldsymbol p}}\nc\bfP{{\boldsymbol P}}\nc\cP{{\mathcal P}}\nc\eP{{\EuScriptP}}\nc\fP{{\mathfrak P}}
\nc\bfq{{\boldsymbol q}}\nc\bfQ{{\boldsymbol Q}}\nc\cQ{{\mathcal Q}}
\nc\bfr{{\boldsymbol r}}\nc\bfR{{\boldsymbol R}}\nc\cR{{\mathcal R}}\nc\sR{{\mathscr R}}
\nc\bfs{{\boldsymbol s}}\nc\bfS{{\boldsymbol S}}\nc\cS{{\mathcal S}}
\nc\bft{{\boldsymbol t}}\nc\bfT{{\boldsymbol T}}\nc\cT{{\mathcal T}}
\nc\bfu{{\boldsymbol u}}\nc\bfU{{\boldsymbol U}}\nc\cU{{\mathcal U}}
\nc\bfv{{\boldsymbol v}}\nc\bfV{{\boldsymbol V}}\nc\cV{{\mathcal V}}\nc\sV{{\mathscr V}}
\nc\bfw{{\boldsymbol w}}\nc\bfW{{\boldsymbol W}}\nc\cW{{\mathcal W}}\nc\sW{{\mathscr W}}
\nc\bfx{{\boldsymbol x}}\nc\bfX{{\boldsymbol X}}\nc\cX{{\mathcal X}}
\nc\bfy{{\boldsymbol y}}\nc\bfY{{\boldsymbol Y}}\nc\cY{{\mathcal Y}}
\nc\bfz{{\boldsymbol z}}\nc\bfZ{{\boldsymbol Z}}\nc\cZ{{\mathcal Z}}
\nc{\integers}{\mathbb{Z}}

\DeclareMathOperator{\supp}{supp}

\begin{document}
\remove{\thispagestyle{empty}

\begin{center}{\large"Codes with hierarchical locality" IT-18-0565\\
Explanation of revision }\end{center}

\vspace*{.2in}
The most significant changes were made in Sec. VIII, while all the 
other changes were minor.

\vspace*{.1in}\noindent\hangindent .1in \hangafter=1
* The reports suggested to give an example or a diagram for erasure recovery in Sec.IV.

  \hangindent .15in \hangafter=1* {We added both:} See Fig.1 and an example for RS-like codes with hierarchy
  in Sec. IVA. The example also highlights the difference between the new codes and the
  codes in TB'14 [23] and BTV'17 [3]. We included details here to connect this example with
  hierarchical LRC codes of RS type in (the newly added) Sec. VIII.E.1.

\vspace*{.1in}\noindent\hangindent .1in \hangafter=1
* Per a suggestion of Reviewer 1, we have added more details to the proof Prop. IV.1. 
  We note that the procedure described in Review 1
  applies only to the case $\rho_1\ge 4$ (when the coefficients are recovered
  fiber-by-fiber,
  as in our example). We inserted this discussion in the text. Regarding correction of
  $\rho_1 - 1$ erasures, if this number is positive, then by definition there are $r_1$
  independent evaluations at nonerased positions. The coefficients of the bivariate
  polynomial can be recovered by interpolating over these coordinates.

\vspace*{.1in}\noindent\hangindent .1in \hangafter=1
* We made a change regarding ``maximum multiplicity of values" as suggested in Review 1.

\vspace*{.1in}\noindent\hangindent .1in \hangafter=1
* Sec.VIII.D: The previous version was problematic (for instance, Example VIII.E was in 
  fact mistaken, and was taken out: it is difficult to ascertain that the points split completely in
  iterations of the fiber product). We rewrote Section~VIII.D and added Sec.VIII.E (parts 1,2) with
  two general families of H-LRC codes with availability (RS- and Hermitian-like codes). These constructions lead
  to codes with different parameters within the same level, and Definitions VIII.5,
  VIII.6 had to be modified to include this option. As a result, the notation and
  calculations have become slightly more cumbersome. 

  As a note, Review 2 mentioned GRM codes, but we weren't sure how to relate them to 
  this part. 
  
\vspace*{.1in}\noindent\hangindent .1in \hangafter=1* We made all the minor changes pointed out in the reports, including removing the
  unused equation numbers.
  
\vspace*{.1in}\noindent\hangindent .3in \hangafter=1* Corrections beyond edits, as well as the added text, are highlighted in the manuscript.
\addtocounter{page}{-1}}
\title
{Codes with hierarchical locality from covering maps of curves}

\author{\IEEEauthorblockN{Sean Ballentine$^{\dag}$}\quad
\and \IEEEauthorblockN{Alexander Barg$^{\ddag}$}\quad
\and \IEEEauthorblockN{Serge Vl{\u a}du{\c t}$\,^{\ast}$}}
\maketitle

\renewcommand{\thefootnote}{}\footnotetext{

\vspace{-.1in}
 

\noindent $^\dag$  Dept. of Mathematics, University of Maryland, College Park, MD 20742. Email: seanballentine@gmail.com. 

\noindent $^\ddag$ Dept. of ECE and ISR, University of Maryland, College Park, MD 20742 and IITP, Russian Academy of Sciences, Moscow, Russia. Email abarg@umd.edu. Research supported by NSF grants CCF1422955 and CCF1618603.

\noindent $^\ast$ Aix Marseille Universit{\'e}, CNRS,
Centrale Marseille,
I2M UMR 7373, 13453, Marseille, France, and IITP, Russian Academy of Sciences, Moscow, Russia. Email
serge.vladuts@univ-amu.fr.

}
\renewcommand{\thefootnote}{\arabic{footnote}}
\setcounter{footnote}{0}

\thispagestyle{empty}

\begin{abstract}
Locally recoverable (LRC) codes provide ways of recovering erased coordinates of the codeword without having to access each of the remaining coordinates. A subfamily of LRC codes with hierarchical locality (H-LRC codes) provides added flexibility to the construction by introducing several tiers of recoverability for correcting different numbers of erasures. We present a general construction of codes with 2-level hierarchical locality from maps between algebraic curves  and specialize it to several code families obtained from quotients of curves by a subgroup of the automorphism group, including rational, elliptic, Kummer, and Artin-Schreier curves. We further address the question of H-LRC codes with availability, and suggest a general construction of such codes from fiber products of curves. Detailed calculations of parameters for H-LRC codes with availability are performed for Reed-Solomon- and Hermitian-like code families. Finally, we construct asymptotically good families of H-LRC codes from curves related to the Garcia-Stichtenoth tower.
\end{abstract}

\section{introduction}
Locally recoverable (LRC) codes form a family of erasure codes motivated by applications in distributed storage that support
repair of a failed storage node by contacting a small number of other nodes in the cluster. While in most situations repairing a single failed node restores the system to the functional state, occasionally there may be a need to recover the data from
several concurrent node failures. Addressing this problem, several papers have constructed families of LRC codes that locally correct multiple erasures \cite{kamath2012codes,tamobarg}. In this paper we consider the intermediate situation when the code corrects a single erasure by 
contacting a small number $r_2$ of helper nodes, while at the same time supporting local recovery of multiple erasures. 
This gives rise to LRC codes with hierarchy, originally defined in \cite{hierarchical}. We observe that the hierarchical locality
property arises naturally in constructions of algebraic geometric LRC codes, leading to a general construction of such codes from
covering maps in towers of algebraic curves.

Paper \cite{hierarchical} obtained an upper bound on the distance of H-LRC codes in terms of the dimension and locality parameters.
Codes that meet this bound with equality are called (distance)-optimal. Optimal H-LRC codes with 2-level locality
over $\ff_q$ of length $n\le q-1$ were constructed in \cite{hierarchical}, expanding on the construction of Reed-Solomon subcodes in 
\cite{tamobarg}. Another generalization of the construction in \cite{tamobarg} builds upon a geometric view of these codes,
and expands it to codes obtained from covering maps of algebraic curves \cite{BTV17}. Using that approach, several follow-up
papers constructed a number of families of LRC codes on curves \cite{B+17,Li17a,LiMaXing17b,JinMaXing17,HMM}. In this paper we further
extend the basic construction of LRC codes on curves to construct LRC codes with hierarchy. Our main result is a general
construction of such codes from covering maps, and we use it to obtain families of H-LRC codes based on quotient curves and other
well-known towers of curves, including quotients of elliptic, Kummer, and Artin-Schreier curves. We also construct H-LRC codes of unbounded length from curves related to the Garcia-Stichtenoth tower \cite{garcia95}, observing that they yield an asymptotically good family of codes. Finally, we briefly consider H-LRC codes with multiple recovering sets, addressing the so-called 
availability problem \cite{RPDV16,tamobarg} in the hierarchical setting.

%
%
%
%

A preliminary version of this work was presented at the 2018 IEEE International Symposium on Information Theory \cite{Bal18}.
At the same time, most of the material in Sections~\ref{sect:elliptic}-\ref{sect:GS} was not included in \cite{Bal18} and appears here for the first time.

\section{Definitions}
A code is LRC if every coordinate of the codeword is a function of only a small number of other coordinates. 
Formalizing this concept, we obtain the following definition.

\begin{definition}[LRC codes, \cite{gopalan2012locality}]\label{def:LRC} 
A code $\cC\subset {\ff}_q^n$ is \emph{locally recoverable with locality $r$}
if for every $i\in \{1,2,\dots,n\}$ there exists an $r$-element subset 
$I_i\subset \{1,2,\dots,n\}\backslash \{i\}$
and a function $\phi_i:{\ff}_q^r\to {\ff}_q$ such that for every codeword $x\in\cC$ we have
   \begin{equation}\label{eq:def1}
   x_i=\phi_i(x_{j_1},\dots,x_{j_r}),
   \end{equation}
where $j_1 < j_2 < \cdots < j_r$ are the elements of $I_i.$
\end{definition}
For a given coordinate $i\in \{1,\dots,n\}$ the set $I_i$ is called the {\em recovering set} of $i$. We denote the restriction 
$\cC|_{\{i\}\cup I_i}$ of the code $\cC$ to the coordinates in $\{i\}\cup I_i$ by $\cC_i,$ and we call the set $\{i\}\cup I_i$ a {\em repair group}. Note that the length of $\cC_i$ is $r+1.$

In this paper we study only linear LRC codes. For them the above definition can be phrased as follows: For every $i\in\{1,2,\dots,n\}$ there exists a punctured code $\cC_i:=\cC|_{\{i\}\cup I_i}$ such that $\dim(\cC_i)\le r$ and distance $d(\cC_i)\ge 2.$ 
Since $\cC_i$ corrects one erasure, every coordinate in the repair group $\{i\}\cup I_i$ can be locally recovered.

In this form, the definition of LRC codes is easily extended to local correction of more than one erasure. Following \cite{kamath2012codes}, we say that a linear code has locality $(r,\rho)$ if for every $i\in \{1,2,\dots,n\}$ there exists a 
subset $I_i\subset\{1,\dots,n\}\backslash\{i\}$ such that the code $\cC_i=\cC|_{i\cup I_i}$ has dimension $\dim(\cC_i)\le r$ and distance $d(\cC_i)\ge \rho.$ In this case any $\rho-1$ erasures can be locally corrected, and we again refer to the set $\{i\}\cup I$
as a repair group. Although this is not needed in this definition, earlier works assumed that $|I_i|=r+\rho$, and that $\dim(\cC_i)=r,d(\cC_i)=\rho+1$, i.e., that the code $\cC_i$ is maximum distance separable (MDS); see for instance \cite{tamobarg,BTV17}.

Let $\cC$ be an LRC code of length $n$, cardinality $q^k$, and distance $d$ (briefly, an $[n,k,d]$ code) with locality $(r,\rho).$
The minimum distance of $\cC$ is bounded above as follows \cite{kamath2012codes}:
   \begin{equation}\label{eq:sb}
   d\le n-k+1-\Big(\Big\lceil\frac kr\Big\rceil-1\Big)(\rho-1).
   \end{equation}
 For correction of a single erasure, this result reduces to the Singleton-type bound of \cite{gopalan2012locality}:
 The distance of an LRC code with locality $r$ is bounded above as 
  \begin{equation}\label{eq:sb2}
  d\le n-k+2-\lceil k/r\rceil.
  \end{equation} 
We say that a code with locality $(r,\rho)$ is {\em optimal} if its parameters meet the bound \eqref{eq:sb}-\eqref{eq:sb2} with equality.
      
In the following definition, due to \cite{hierarchical}, we introduce linear codes with hierarchical locality, which form the main subject of our paper.
\begin{definition}[H-LRC codes \cite{hierarchical}]\label{def:hie} Let $\rho_2< \rho_1$ and $r_2\le r_1.$ A linear code $\cC$ is H-LRC and 
parameters $((r_1,\rho_1),(r_2,\rho_2))$ if for every $i\in\{1,\dots,n\}$ there is a punctured code $\cC_i$ such that {$i\in\supp(\cC_i)$} and
   \begin{enumerate}
   \item $\dim(\cC_i)\le r_1$,
    \item $d(\cC_i)\ge \rho_1,$ and 
    \item $\cC_i$ is an $(r_2,\rho_2)$ LRC code.
    \end{enumerate}
\end{definition}
The intuition behind this definition is that any $\rho_2-1$ erasures can be recovered using the local correction procedure
of the code $\cC_i$ (i.e., using recovering sets of size $r_2$ within the support $\supp(\cC_i)$),
and any larger number of erasures up to $\rho_1-1$ can be recovered using the entire set of coordinates
of the code $\cC_i$. Below we call the codes $\cC_i$ the {\em middle codes} and denote their length by $\nu$. {Thus, 
$\cC_i$ is a $[\nu,r_1,\rho_1]$ LRC code with locality $r_2$, and in all our constructions $\rho_2=2$ which corresponds to local 
correction of a single erasure (but see Proposition~\ref{prop:many} and the related discussion).} In all of our constructions
the coordinate set $[n]$ will be partitioned into disjoint groups of size $\nu$, and thus, the codes $\cC_i$ coincide for all
$i$ within each of the groups, and have disjoint supports otherwise. For the purposes of this paper, we could incorporate this property 
into the definition of the H-LRC code.

This definition can be extended by induction to any number of levels of hierarchy in an obvious way, and we denote the set of
parameters of a $\tau$-level H-LRC code by $(r_i,\rho_i), i=1,\dots,\tau.$ 
A bound on the distance of a $\tau$-level H-LRC code that extends \eqref{eq:sb} to all $\tau\ge 1$, takes the following form \cite{hierarchical}: 
   \begin{equation}\label{eq:sh}
     d\le n-k+1-\Big(\Big\lceil\frac {k}{r_\tau}\Big\rceil-1\Big)(\rho_\tau-1)-
     \sum_{j=1}^{\tau-1}\Big(\Big\lceil\frac{k} {r_j}\Big\rceil-1\Big)(\rho_j-\rho_{j+1}).
   \end{equation}
   An H-LRC code whose parameters meet this bound with equality will be called {\em optimal} throughout.
   
In this work we extend constructions of optimal LRC codes in the sense of \eqref{eq:sb}-\eqref{eq:sb2} to the hierarchical case.
There are several constructions of optimal LRC codes in the literature \cite{kamath2012codes,sil13,TPG13,tamobarg,MaGe17,Huang17,LiuMesnagerChen18}. Among them we single out the
construction of \cite{tamobarg} which isolates certain subcodes of Reed-Solomon (RS) codes that have the locality property. This
code family relies on an algebraic structure of LRC codes that affords an extension to codes on algebraic curves. The theory of algebraic geometric codes with locality, introduced in \cite{BTV17} and further
developed in \cite{B+17,LiMaXing17b,JinMaXing17} provides a framework for our study here, and we describe it in the next section.

\section{LRC codes on algebraic curves}\label{sec:AGLRC}
The following construction of LRC codes from covering maps of algebraic curves was introduced in \cite{BTV17} and is based on the approach in \cite{tamobarg} (even though the authors of \cite{tamobarg} did not phrase their results in geometric language). Let $\phi:X\to Y$ be a rational separable map of smooth projective absolutely irreducible curves of degree $r+1$ over a finite field 
$\kk$ and let $\phi^\ast:\kk(Y)\to \kk(X)$ be the corresponding map of the function fields. Since $\phi$ is separable, the primitive element theorem implies that there exists a function $x \in \kk(X)$ such that $\kk(X) = \kk(Y)(x)$.
Let $S=\{P_1, \dots, P_m\}$ be a set of $\kk$-rational points on $Y$ and let $Q_{\infty}$ be a positive divisor whose support is disjoint from $S$ (typically we choose $\text{supp}(Q_\infty)\subset \pi^{-1}(\infty)$ for a projection $\pi:Y\to \PP_{\kk}^1$). For each $i$, let $\{P_{ij}\}$ be the collection of points on $X$ in the preimage of $P_i$, i.e. $\{P_{ij}\}=\phi^{-1}(P_i)$. We assume that each
$P_i$ splits completely in the function field $\kk(X)$, and therefore $|\phi^{-1}(P_i)|=r+1$ for some fixed integer $r$ and all $i=1,\dots,m.$ Finally, define the set of points 
   $$
   D=\bigcup_{i=1}^m \bigcup_{j=1}^{r+1} P_{ij}\subset X(\kk)
   $$
    that will serve the
evaluation points of the code that we are constructing.

Let $\{f_1, \dots, f_t\}$ be a basis of the linear space $L(Q_{\infty}),$  where $t:=\dim(L(Q_{\infty})).$ 
These functions can be thought of as functions in $\kk(X)$ by the embedding of function fields $\phi^\ast,$ and each of these functions is constant on the fibers of $\phi$. Let $V$ be the subspace of $\kk(X)$ of dimension $rt$ spanned over $\kk$ by the functions 
\begin{equation}\label{eq:basis}
\{f_jx^i, i=0, \dots, r-1, j=1, \dots, t\} \text{.}
\end{equation}
The code $\cC(D, \phi)$  is defined as the image of the map
\begin{equation}\label{eq:map}
\begin{aligned}\text{ev}_D : &V \to \kk^{(r+1)m}\\
&v \mapsto (v(P_{ij}), i = 1, \dots, m, j=1, \dots, r+1).
\end{aligned}
\end{equation}

The code $\cC(D, \phi)$ is locally recoverable with repair groups of size $r+1.$ Denote by $c_{ij}$ the position in the codeword that corresponds to the point $P_{ij}$. The recovering set for $c_{ij}$ is formed by the $r$ positions given by the points 
$\{P_{il},l\ne j\}.$ Recovery of $c_{ij}$ proceeds by polynomial interpolation.
Properties of the codes generated by this construction are well-studied and for more information about their parameters and basic examples of such codes we once again refer the reader to \cite{BTV17}. 

\subsection{The case of the projective line and the construction of \cite{tamobarg}}\label{sect:TBcodes} A particular case of this construction that arises when both $X$ and $Y$ are taken to be projective lines $\PP^1_{\kk},$ gives
rise to the RS-type LRC codes constructed in \cite{tamobarg}. These codes are constructed as evaluations of 
functions from the $k$-dimensional linear space $V\subset \ff_q[x]$ spanned by $\{\phi^jx^i, j=0,1,\dots,k/r-1, i=0,1,\dots,r-1\},$ where
$\phi\in\ff_q[x]$ is a polynomial of degree $r+1$ that is constant on the repair groups of size $r+1.$ As in \cite{tamobarg}, let us assume that $(r+1)|n$ and $r|k.$ Applying the general definition 
\eqref{eq:basis}-\eqref{eq:map} in this case, we obtain optimal LRC codes whose parameters meet the bound in \eqref{eq:sb2} with equality. Indeed, the maximum degree of a polynomial in $V$ is $(\frac{k}{r}-1)(r+1) + (r-1)=k\frac{r+1}{r}-2,$ and therefore, $d_{\text{min}}(\cC)\ge n-k\frac{r+1}{r}+2$.

Moreover, increasing the degree of $\phi$ from $r+1$ to $r+\rho-1,\rho\ge 2$ and using the same
construction as above with repair groups of size $r+\rho-1$, we obtain a class of LRC codes whose repair groups are resilient to up to $\rho-1$ erasures. For a chosen value of $\rho\ge 2$ and for $(r+\rho-1)|n$ the parameters of
these codes meet the bound in \eqref{eq:sb} with equality. 

\section{H-LRC codes on algebraic curves}\label{sec:curves}

In this section we present a natural extension of the construction from the previous section that gives rise to LRC codes with hierarchy. Let 
$X,Y,$ and $Z$ be smooth projective absolutely irreducible curves over a finite field 
$\kk$. Consider the following sequence of maps:
\begin{equation}\label{eq:tow}
X\xrightarrow{\phi_2}Y\xrightarrow{\phi_1}Z,
\end{equation}
where $\phi_1$ and $\phi_2$ are rational seperable maps of degree $s+1$ and $r_2+1,$ respectively, where $s\ge 2, r_2\ge 1.$ Define $\psi:=\phi_1\circ\phi_2.$
Let $\phi_2^\ast:\kk(Y)\to \kk(X)$ and $\phi_1^\ast:\kk(Z)\to\kk(Y)$ be the 
corresponding maps of the function fields. Let $x\in \kk(X)$ and $y
\in\kk(Y)$ be primitive elements of their respective algebraic extensions, i.e., suppose that 
$\kk(X) = \kk(Y)(x)$ and $\kk(Y) = \kk(Z)(y)$. Let $S=\{P_1,\dots, P_m\}$ be a collection of points on $Z(\kk)$ 
that split completely on $X$, i.e., $|\psi^{-1}(P_i)| = (r_2+1)(s+1)$. Let $D=
\bigcup_{i=1}^{m}\psi^{-1}(P_i), n:=|D|,$ and let $Q_{\infty}$ be a positive divisor on $Z$ with support disjoint from $S$. We will assume that $\text{supp}((y)_\infty)\cap \phi_2^{-1}(S)=\emptyset$ and $\text{supp}((x)_\infty)\cap \psi^{-1}(S)=\emptyset,$ where $(\cdot)_\infty$ is the polar divisor.

As before, let $\{f_1, \cdots, f_t\}$ be a basis for the space $L(Q_{\infty}).$
Let $V$ be the vector space of functions over $\kk$ spanned by 
  \begin{equation}\label{eq:V}
\{f_iy^jx^k| 1\le i \le t, 0\le j \le s-1, 0\le k \le r_2-1\}.
  \end{equation}
Let $\nu:=(s+1)(r_2+1)$ { and note that $n=m\nu$.} As in \eqref{eq:map}, define the code $\cC(D,\{\phi_1, \phi_2\})$  as the image of the evaluation map 
\begin{equation}\label{eq:hmap}
\begin{aligned}
\text{ev}_D : &V \to \kk^{n}\\
&v \mapsto (v(P), P\in D) \text{.}
\end{aligned}
\end{equation}
\blue{Below in Sec.~\ref{sec:RS} we give a simple example of the above construction, taking $X,Y,$ and $Z$ to be projective lines 
${\mathbb P}^1$ and constructing $\cC$ as a subcode of an RS code with hierarchical locality. We also illustrate erasure recovery by the local and middle codes. Note that the supports of the middle codes are formed by the preimages of the points in $S$ on the curve $X$. This is illustrated
in Fig.~\ref{fig:cover}.}

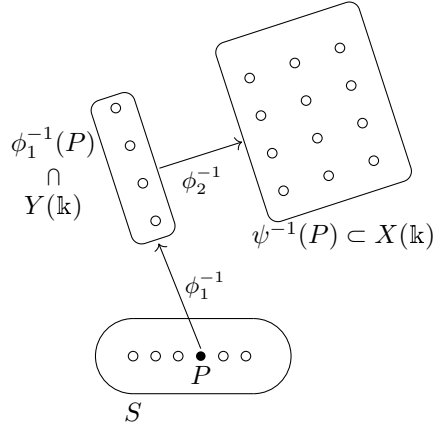
\begin{figure}[H]
\captionsetup{width=0.8\textwidth}
\begin{center}\begin{tikzpicture}

\draw(0,0) circle(.025in);
\fill[black!100](0.3,0) circle(.025in) node[anchor=north] {$P$};
\draw(-0.3,0) circle(.025in) ;
\draw(-0.6,0) circle(.025in);
\draw(0.6,0) circle(.025in);
\draw(0.9,0) circle(.025in);
\draw (-.6,0.5) arc (90:270:.5);
\draw (1,.5) arc (90:-90:.5);
\draw (-.6,.5) -- ++(0+0:1.6);
\draw  (1,-.5) -- ++(0+180:1.6) node[below,xshift=0cm] {$S$};

\draw(-.3,1.8) circle(.025in);
\draw(-.47,2.3) circle(.025in);
\draw(-.64,2.8) circle(.025in);
\draw(-.83,3.3) circle(.025in);
\draw[rounded corners,rotate around={108:(-.6,2.5)}] (-1.6,2.2) rectangle (0.4,2.8)
node[pos=0, anchor=south] {\hspace*{-1.3in}$\text{$\begin{array}{c}\phi_1^{-1}(P)\\[-.08in]\rotatebox{270}{$\subset$}\\[.08in] Y(\kk)\end{array}$}$};

\draw[thin,<-] (-.26,1.5) -- (0.3,0.1) node[pos=0.4, anchor=west] {$\text{\small$\phi_1^{-1}$}$};


\draw(1.4,2.2) circle(.025in);
\draw(2.0,2.4) circle(.025in);
\draw(2.6,2.6) circle(.025in);
\draw(1.25,2.7) circle(.025in);
\draw(1.85,2.9) circle(.025in);
\draw(2.45,3.1) circle(.025in);
\draw(1.1,3.2) circle(.025in);
\draw(1.7,3.4) circle(.025in);
\draw(2.3,3.6) circle(.025in);
\draw(0.95,3.7) circle(.025in);
\draw(1.55,3.9) circle(.025in);
\draw(2.15,4.1) circle(.025in);

\draw[rounded corners,rotate around={18:(1.8,3.25)}] (0.8,2) rectangle (2.8,4.5);
\draw(2.2,1.6) node[] {$\text{$\psi^{-1}(P)\subset X(\kk)$}$};

\draw[thin,->] (-.25,2.5) -- (.85,2.85) node[pos=0.4, anchor=north] {\hspace*{.1in}$\text{\small$\phi_2^{-1}$}$};

\end{tikzpicture}
\end{center}
\caption{\blue{The point $P\in Z(\kk)$ is lifted to the curve $X$. The preimage $\psi^{-1}(P)$ forms the support of a middle code $\cC_\alpha.$ This code is LRC with locality $(r_2,2)$, and its repair groups are formed by the fibers of the covering $X\xrightarrow{\phi_2}Y.$ Univariate interpolation accounts for local repair of a single erasure by accessing the coordinates within its fiber, while bivariate interpolation over the entire set
of nonerased locations in $\psi^{-1}(P)$ recovers up to $\rho_1-1$ erasures. The example below in this section shows detailed calculations for
the case of projective lines.}}\label{fig:cover}
\end{figure}

Let $\deg(x)$ and $\deg(y)$ be
the degrees of the maps $x:X\to{\mathbb P}^1$ and $y:Y\to{\mathbb P}^1.$
Recall that \cite{BTV17} assumed that the function $x$ is injective on the fibers $\{P_{ij},j=1,\dots,r+1\}$ (see Sec.~\ref{sec:AGLRC}), and that this assumption holds in all the examples considered there. In our setting here, $x$ may not be injective on fibers of the map $\psi:=\phi_1\circ\phi_2$. Denote by $\deg_{\psi}(x)$ the largest number of zeros of the function $x:X\to{\mathbb P}^1$ on any single fiber of $\psi.$

\begin{proposition}\label{prop:general} The code $\cC=\cC(D,\{\phi_1, \phi_2\})$ is a 2-level H-LRC code {of length $n=m\nu$} with parameters
$ ((r_1, \rho_1),(r_2,\rho_2=2)) $, where the middle codes are of length $\nu=(s+1)(r_2+1),$ dimension $r_1=r_2s,$ and distance
 \begin{gather}\label{eq:rho1}
\rho_1\ge \max(2(r_2+1)-\deg_{\psi}(x)(r_2-1),4).
   \end{gather}
  We also have 
   \begin{align}
   \dim(\cC)&=t r_2 s\ge r_1(\deg(Q_\infty)-g_Z+1) \label{eq:dim}\\
   d_{\text{\rm min}}(\cC)&\ge n-(\deg(Q_\infty)(s+1) +\deg (y)(s-1))(r_2+1)-\deg(x)(r_2-1), \label{eq:d}
   \end{align}
   where $g_Z$ is the genus of $Z$ and $t=\dim(L(Q_\infty)).$
\end{proposition}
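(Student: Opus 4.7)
The plan is to establish the middle-code structure first, then bound $\rho_1$, and finally compute the global parameters. Since each $f_i \in L(Q_\infty)$ lies in $\kk(Z)$, its pullback to $X$ via $\psi^*$ is constant on every fiber of $\psi = \phi_1 \circ \phi_2$, so the restriction of any $v = \sum c_{ijk} f_i y^j x^k \in V$ to $\psi^{-1}(P_i)$ depends only on the scalars $\sum_i c_{ijk} f_i(P_i)$. Hence $\cC_i$ equals the image of the $r_2 s$-dimensional space $W := \operatorname{span}_\kk\{y^j x^k : 0 \le j \le s-1,\ 0 \le k \le r_2-1\}$ under evaluation on $\psi^{-1}(P_i)$, so $\dim \cC_i \le r_1 = r_2 s$; the length $n = m\nu$ is immediate from complete splitting of $S$. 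For the $(r_2, 2)$-locality of $\cC_i$, I would restrict to any sub-fiber $\phi_2^{-1}(Q)$ with $Q \in \phi_1^{-1}(P_i)$: on this sub-fiber $y$ is the constant $y(Q)$, so the codeword becomes the evaluation of a polynomial in $x$ of degree $\le r_2-1$ at the $r_2+1$ points of $\phi_2^{-1}(Q)$, and any single erasure is recovered by Lagrange interpolation over the remaining $r_2$ points.

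For the distance $\rho_1$ of $\cC_i$, given a nonzero $u \in W$ I would set $B = \{Q \in \phi_1^{-1}(P_i) : u(x, y(Q)) \equiv 0\}$ and factor $u = \prod_{Q \in B}(y - y(Q)) \cdot \tilde u(x, y)$ with $\deg_y \tilde u \le s - 1 - |B|$, $\deg_x \tilde u \le r_2 - 1$, and $\tilde u(x, y(Q)) \not\equiv 0$ for $Q \notin B$. Since $\deg_y u \le s - 1$, one has $|B| \le s-1$, and the left factor contributes $|B|(r_2+1)$ zeros on $\psi^{-1}(P_i)$. On each remaining sub-fiber, $\tilde u(x, y(Q))$ is nonzero of degree $\le r_2 - 1$ with at most $r_2-1$ distinct roots in $\bar{\kk}$; combined with the hypothesis $|\psi^{-1}(P_i) \cap x^{-1}(\alpha)| \le \deg_\psi(x)$ for every $\alpha$, a careful union argument over $Q \notin B$---tightest in the extremal regime $|B| = s-1$, where $\tilde u \in \kk[x]$ and all non-$B$ restrictions share the same $\le r_2 - 1$ roots---bounds the total zero count of $u$ on $\psi^{-1}(P_i)$ by $(s-1)(r_2+1) + (r_2-1)\deg_\psi(x)$, yielding $\rho_1 \ge 2(r_2+1) - (r_2-1)\deg_\psi(x)$. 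The unconditional $\rho_1 \ge 4$ bound follows separately: $|B| \le s-1$ forces at least two sub-fibers on which $u$ does not vanish identically, and since $x$ generically separates the points of each sub-fiber (as a primitive element of $\kk(X)/\kk(Y)$), the row polynomial of degree $\le r_2 - 1$ has at most $r_2 - 1$ zeros among the $r_2+1$ sub-fiber points, contributing $\ge 2$ nonzero coordinates per nonzero sub-fiber, for a total weight $\ge 4$.

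For the global parameters, Riemann-Roch on $Z$ gives $t = \dim L(Q_\infty) \ge \deg(Q_\infty) - g_Z + 1$, and the $r_2 s t$ functions $\{f_i y^j x^k\}$ are $\kk$-linearly independent because $\{y^j\}_{0 \le j \le s}$ is a $\kk(Z)$-basis of $\kk(Y)$ and $\{x^k\}_{0 \le k \le r_2}$ is a $\kk(Y)$-basis of $\kk(X)$; hence $\dim \cC = r_2 s t \ge r_1(\deg(Q_\infty) - g_Z + 1)$. For $d_{\min}(\cC)$ I would bound the polar divisor of $v = \sum c_{ijk} f_i y^j x^k$ on $X$: the pullback of $f_i$ has polar divisor of degree at most $(s+1)(r_2+1)\deg(Q_\infty)$, while $y^j$ contributes at most $(s-1)(r_2+1)\deg(y)$ and $x^k$ at most $(r_2-1)\deg(x)$, so the number of zeros of $v$ (which equals the degree of its polar divisor) is at most $((s+1)\deg(Q_\infty) + (s-1)\deg(y))(r_2+1) + (r_2-1)\deg(x)$, giving the stated lower bound on $d_{\min}(\cC)$. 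The main obstacle I anticipate is the bookkeeping in the middle-code distance: naive per-sub-fiber bounds on zeros of $\tilde u$ inflate by a factor of roughly $(s+1-|B|)$, and one must argue that the maximum zero count over all choices of $|B| \in \{0, 1, \ldots, s-1\}$ is realized precisely in the extremal configuration $|B| = s-1$---this requires showing that the additional $y$-dependence of $\tilde u$ when $|B| < s - 1$ cannot produce more than a compensating number of zeros on the extra non-$B$ sub-fibers.
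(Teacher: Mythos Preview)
Your overall structure mirrors the paper's proof closely: partition into $\psi$-fibers for the middle codes, into $\phi_2$-fibers for the local $(r_2,2)$ codes, and bound the global distance by pole-divisor degree counting. The main point of divergence is the argument for $\rho_1 \ge 4$. The paper argues by erasure correction: given any three erasures, either they lie in distinct repair groups (and are handled by the $(r_2,2)$-locality), or at least two share a group, leaving at least $s$ of the $s+1$ sub-fibers with $\ge r_2$ surviving coordinates; on each such sub-fiber one interpolates the univariate restriction in $x$, and the resulting $s$ coefficient-vectors determine all $r_2 s$ bivariate coefficients via a Vandermonde system in $y$. Your direct weight argument---at least two sub-fibers on which $u \not\equiv 0$, each contributing $\ge 2$ nonzero coordinates---reaches the same conclusion and is arguably cleaner.

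On the first term $\rho_1 \ge 2(r_2+1) - \deg_\psi(x)(r_2-1)$, the paper simply asserts the zero-count bound $(s-1)(r_2+1) + \deg_\psi(x)(r_2-1)$ without further justification, so your $y$-factorization attempt already goes beyond what the paper supplies. The obstacle you flag is real: the naive per-sub-fiber estimate $b(r_2+1) + (s+1-b)(r_2-1) = 2b + (s+1)(r_2-1)$ is maximized at $b = s-1$ but only gives $(s-1)(r_2+1) + 2(r_2-1)$, i.e.\ $\rho_1 \ge 4$, and invoking $\deg_\psi(x)$ solely in the case $|B| = s-1$ (where $\tilde u \in \kk[x]$) does not by itself dominate the intermediate cases when $\deg_\psi(x) = 1$. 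The paper does not fill this in either; in all its concrete applications (projective line, Kummer, Artin--Schreier, power maps on the Garcia--Stichtenoth tower) it arranges $\deg_\psi(x) = 1$ and then reads off $\rho_1 = r_2 + 3$ directly from the explicit univariate degree of the evaluated functions rather than from the general bound of the proposition.
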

{\em Proof:} The set $D$ of $n$ points is naturally partitioned into subsets of size $\nu,$ given by
the fibers of the covering map $\psi$ and each of them supports a code $\cC_\alpha,\alpha=1,\dots,n/\nu$ of length $\nu$.  The support of each of the
codes $\cC_\alpha$ is further partitioned into repair groups of size $r_2+1$ each of which is formed of the coordinates contained in
a particular fiber of the map $\phi_2$. 
Restricted to such a fiber, the functions $f_1, \dots , f_t$ and $y$ are constant, and any function in $V$ becomes a 
polynomial in $x$ of degree $\le r_2-1$. Therefore, $\cC$ restricted to a fiber of $\phi_2$  is an $r_2$-dimensional 
code with minimum distance $\rho_2$ determined by the maximum degree of such a polynomial in $x,$ which is $r_2-1$. 
The length of the restricted code is $r_2$, so it is a single parity check code with distance $\rho_2=2$.
Furthermore, this implies that each of the codes $\cC_\alpha$ (i.e., $\cC$ restricted to the fibers of $\psi$) is an LRC code with parameters $(r_2, 2)$. 

It remains to determine the parameters of the codes $\cC_\alpha$. First note that the functions $f_1, \dots ,f_t$ are constant on these fibers, and therefore, $V$ restricted to each of them becomes an $r_1$-dimensional space of functions spanned by 
    $$
    \{y^jx^k, j=0,1,\dots,s-1; k=0,1,\dots,r_2-1\}.
    $$ 
The minimum distance of $\cC_{\alpha}$ is determined by the maximum number of zeros of a non-zero function in $V,$ restricted to
a fiber of $\psi:$
  $$
\rho_1\ge \nu-(s-1)(r_2+1)-\deg_{\psi}(x)(r_2-1),
  $$
which gives the first term under the maximum in \eqref{eq:rho1}. \blue{Suppose that $\rho_1> 4.$ The code $\cC_\alpha$ can correct $\rho_1-1$ erasures by performing bivariate polynomial interpolation over some $r_1$ nonerased independent coordinates. 

If the first term in \eqref{eq:rho1} is trivial, we can show  $\rho_1\ge 4$ by proving that the code $\cC_\alpha$ corrects any three erasures. Indeed, if they are located in different repair groups of size $r_2+1,$ they can be recovered using the LRC properties of $\cC_\alpha.$
 Suppose that at least two of them fall in the same repair group. The restriction of the function $f$ \eqref{eq:hmap} to the support of the code
$\cC_\alpha$ is a bivariate polynomial with at most $r_1=r_2 s$ nonzero coefficients. On each of the remaining $s$ repair groups (fibers of $\phi_2$), the function $y$ is a constant, and we are left with a univariate polynomial of degree $r_2-1.$ Its coefficients can be recovered from $r_2$
independent evaluations on the fiber; thereby, we can recover the entire function $f$ restricted to the fiber of the mapping $\psi.$}

 Finally, the bounds in \eqref{eq:dim}, \eqref{eq:d} are obtained by the same arguments applied to the code $\cC$ in its entirety. \QEDA


\subsection{A family of optimal RS-like H-LRC codes}\label{sec:RS}
Using the above ideas, we show how the construction of RS-like codes in \cite{tamobarg} can be extended to yield
optimal two-level H-LRC codes. Let $\kk=\ff_q$ and let $r_2, r_1,$ and $n\le q$ be such that $r_1=sr_2,$ \blue{$(r_2+1)|\nu$,} and
$\nu|n$.

To construct the code we start with choosing a subset $D$ of $n$ points in $\kk$ and partition it into disjoint
subsets $D_\alpha$ of size $\nu$ each. Each of the subsets $D_\alpha$ will support an LRC code
of dimension $r_1$ and distance $r_2+3.$ The repair groups of this LRC code are of size $r_2+1.$ Assume that there is 
a polynomial $y\in \kk[x]$  of degree $r_2+1$ that is constant on these repair groups\footnote{A way to construct such polynomials is presented in \cite{tamobarg}, and we do not discuss it here.}. Further, 
we choose a polynomial $f\in \kk[x]$ of degree $\nu$ that is constant on each of the subsets 
$D_\alpha$.

For a positive integer $t$, let $V\subseteq\kk[x]$ be the $tr_1$-dimensional space spanned by 
   \begin{equation}\label{eq:V1}
   \{f^k y^j x^i,\;i=0,\dots,r_2-1,j=0,\dots,s-1,k=0,\dots,t-1\}.
   \end{equation}
 To connect this equation to \eqref{eq:V}, we note that the powers $f^k$ in \eqref{eq:V1} form a basis of the space $\cL((t-1)\infty)$ and  
 correspond to $f_i,i=1,\dots,t$ in \eqref{eq:V}. 
 Let us construct a code $\cC$ by evaluating these functions at the points in $D$ as described in \eqref{eq:hmap}.
For any $k$, the function $f^k$ is constant on each of the sets $D_\alpha,$ and therefore, the functions in $V$ restricted to each of these sets have degree at most 
$(s-1)(r_2+1)+r_2-1.$ 
This implies that the distance of $\cC_\alpha$ is at least
   \begin{align*}
   d_\text{min}(\cC_\alpha)&\ge \nu-(s-1)(r_2+1)-r_2+1\\
     &=r_2+3
   \end{align*}
which meets the bound \eqref{eq:sb2} with equality.

The dimension of the code $\cC$ is $\dim(V)=tr_1$ and the distance is found by counting the maximum degree of a function in $V,$
and is bounded below as
   \begin{equation}\label{eq:dred}
   d_{\text{min}}(\cC)\ge n-t(r_1+r_2+1+s)+r_2+3
   \end{equation}
 meeting the upper bound in \eqref{eq:sh}.
 We conclude with the following proposition.
 \begin{proposition}\label{prop:RS}
 Let $n\le q, t\ge 1$ and let $r_1,r_2$ be such that $r_1=sr_2$ for some $s>1$ and \blue{$(r_2+1)|\nu$}, $\nu|n.$
 The parameters of the code $\cC$ 
  are $[n,tr_1,d_{\text{\rm min}}=n-t(r_1+r_2+1+s)+r_2+3].$ 
 Furthermore, $\cC$ is an optimal H-LRC code with two levels of hierarchy and locality parameters $(r_1,r_2+3),(r_2,2).$ 
 The middle codes $\cC_\alpha$ are optimal $[\nu,r_1,r_2+3]$ LRC codes.
 \end{proposition}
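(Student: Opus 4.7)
The plan is to verify three items: the dimension of $\cC$, the minimum distance, and optimality at both the global and the middle-code level. The H-LRC structure itself follows the same pattern as in Proposition~\ref{prop:general}: restricting a function in $V$ to a repair group (a fiber of $y$) kills $f^k$ and $y^j$ (they are constants there), leaving a polynomial in $x$ of degree at most $r_2-1$, so the repair group supports an $(r_2,2)$-LRC with distance $\ge 2$ (a single parity check code of length $r_2+1$); restricting to $D_\alpha$ kills only $f^k$, leaving a polynomial of degree at most $(s-1)(r_2+1)+(r_2-1)=\nu-(r_2+3)$, so the middle code has length $\nu$, dimension at most $r_1$, and distance at least $r_2+3$.

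I would establish $\dim\cC=tr_1$ in two steps. First, the $tr_1$ polynomials $\{f^ky^jx^i\}$ have pairwise distinct degrees: within fixed $k$ the degree equals $k\nu+j(r_2+1)+i$ with $0\le i\le r_2-1$ and $0\le j\le s-1$, so the pairs $(j,i)$ encode these degrees injectively (since $r_2<r_2+1$), and for different $k$ the intervals of degrees are disjoint because $\nu=(s+1)(r_2+1)>(s-1)(r_2+1)+r_2$. Hence the monomials in \eqref{eq:V1} are linearly independent and $\dim V=tr_1$. Second, the maximum degree of a nonzero element of $V$ is $(t-1)\nu+(s-1)(r_2+1)+r_2-1$, which after simplification equals $t\nu-(r_2+3)$. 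Provided $n$ is at least this quantity plus one (implicit in the positivity of the claimed distance, and consistent with $\nu\mid n$ and $n\le q$), the evaluation map $\mathrm{ev}_D$ is injective, giving both $\dim\cC=tr_1$ and the distance estimate $d_{\min}(\cC)\ge n-t(r_1+s+r_2+1)+r_2+3$ already recorded in \eqref{eq:dred}.

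The final step is to confirm optimality by substitution into \eqref{eq:sh}. With $k=tr_1$, $\tau=2$, $(\rho_1,\rho_2)=(r_2+3,2)$, and $r_1=sr_2$, we have $\lceil k/r_2\rceil=ts$ and $\lceil k/r_1\rceil=t$, so \eqref{eq:sh} becomes
\[
n-tr_1+1-(ts-1)(1)-(t-1)(r_2+1)=n-t(r_1+s+r_2+1)+r_2+3,
\]
matching the claimed distance. For the middle code, the bound \eqref{eq:sb} with parameters $(\nu,r_1,r_2,2)$ gives $\nu-r_1+1-(s-1)=(s+1)(r_2+1)-sr_2-s+2=r_2+3$, so the $[\nu,r_1,r_2+3]$ middle codes are optimal LRCs. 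The main obstacle is not any single step but the bookkeeping of the degree counts and ceilings; the only substantive ingredient beyond routine computation is the existence of the polynomials $y$ of degree $r_2+1$ and $f$ of degree $\nu$ that are constant on the appropriate groupings, which reduces to the construction of \cite{tamobarg} applied twice (once with locality $r_2$ and once with ``locality'' $r_1$ at the coarser level).
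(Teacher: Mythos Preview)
Your proposal is correct and follows essentially the same route as the paper: compute the maximum degree of a function in $V$ (globally and restricted to a block $D_\alpha$) to bound the distances below, then match these values against the upper bounds \eqref{eq:sh} and \eqref{eq:sb2}. The paper's argument is given informally in the paragraphs preceding the proposition and is terser than yours; in particular, you make explicit two points the paper leaves implicit, namely the linear independence of the spanning set in \eqref{eq:V1} (via distinctness of degrees) and the arithmetic verification that the claimed distance actually saturates \eqref{eq:sh}.
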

 The code family in this proposition is originally due to \cite{hierarchical}, where it was obtained as an extension of 
 \cite{tamobarg}, with no connection to the geometric interpretation. Making this connection enables us to increase the code
 length to $n=q+1$ in the next section.
 
 \vspace*{.1in}\blue{{\em Example:} The following example is very much in the spirit of the main construction of \cite{tamobarg}; see also Example 1 in
 \cite{BTV17}. Let $q=37,\kk=\ff_q,$ and let $X,Y,Z$ be copies of the projective line ${\mathbb P}^1$ with function fields $\kk(x),\kk(y),\kk(z),$ respectively.
 Suppose that $\phi_2: x\mapsto x^4,\, \phi_1: y\mapsto y^3,$ then
   $$
   \kk(x)\stackrel{\phi_2^\ast}\leftarrow \kk(y) \stackrel{\phi_1^\ast}\leftarrow \kk(z),
   $$
where $y^4-z=0$ and $x^{3}-y=0.$   
 Take $n=36,\nu=12,r_2=3,r_1=6,t=2,$ then $\dim(\cC)=12$ and $d_{\text{\rm min}}=18$ from Eq.~\eqref{eq:dred}, and thus the code $\cC$ has parameters
 [36,12,18] and is distance-optimal. The code is constructed as follows. 
Observe that 
   $$
   \phi_1^{-1}(1)=\{1,26,10\},\phi_1^{-1}(10)=\{7,34,33\},\phi_1^{-1}(26)=\{16,9,12\}.
   $$
These 9 points form the set of points on $Y$ used in the construction. Lifting them further to $X$, we obtain the fibers of the map $\psi^{-1}$ as follows:
  $$
  \begin{aligned}
    &D_1=(B_1^{(1)}=\{1,6,36,31\}) \cup (B_2^{(1)}=\{8,11,29,26\}) \cup (B_3^{(1)}=\{27,14,10,23\})\\
    &D_2=(B_1^{(2)}=\{2,12,35,25\}) \cup (B_2^{(2)}=\{16,22,21,15\}) \cup (B_3^{(2)}=\{17,28,20,9\})\\
    &D_3=(B_1^{(3)}=\{3,18,34,19\}) \cup (B_2^{(3)}=\{24,33,13,4\}) \cup (B_3^{(3)}=\{7,5,30,32\}).
    \end{aligned}
    $$
 The subset $D_\alpha, \alpha=1,2,3$ is the evaluation set of the middle code $\cC_\alpha$ with parameters
 $[12,6,6]$ and locality $3$. Finally, let us construct the set of functions \eqref{eq:V1}. Let $f=x^{12},y=x^4,$
 then the basis of functions is given by  
   $$\cF:=\{f^iy^jx^k=x^{12i+4j+k};\,i=0,1,j=0,1,k=0,1,2\}.$$ 
   
   The code is defined by the linear map $\kk^{12}\to\kk^{36}$ that sends a vector $(v_{ijk},i=0,1,2;j=0,1;k=0,1)$ to the set of
evaluations of the polynomial $v(x):=\sum_{i,j,k}v_{ijk}x^{12i+4j+k}$ at the points $a\in\ff_q^\ast.$
The code $\cC$ can correct up to 17 erasures by interpolating the polynomial $v(x)$ over $12$ points outside the erased set. 

At the same time, any 5 erasures can be corrected using a local repair procedure. In the worst case, these erasures are
located within a single fiber of $\psi:X\to Z$, say $D_1$. The polynomial $v(x)$ restricted to $D_1$ has at most $6$ nonzero
coefficients, and since the code $\cC_1=\cC|_{D_1}$ has distance 6, it can be interpolated from its values at $6$ (or fewer) points in $D_1$ 
outside the erased subset. Finally, any single erasure can be recovered from the 3 nonerased points in its repair group $B_i^{(\alpha)}$ because
the restriction of the code $\cC$ to $B_i^{(\alpha)}$ is a $[4,3,2]$ RS code obtained by evaluating a polynomial of degree $\le 2.$

To give an example, suppose that all $v_{ijk}=1,$ then $v(x)=(1+x+x^2)(1+x^4)(1+x^{12}).$ On the set $D_1$ this polynomial evaluates to
  $c_1:=(12, 24, 4, 13, 20, 4, 7, 0, 4, 17, 0, 30),$ where the order of locations is the same as in the set $|B_1^{(1)}|B_2^{(1)}|B_3^{(1)}|.$
Suppose that the value 20 is erased, which corresponds to location 8 in the set $B_2^{(1)}$. 
The restriction of the polynomial $v(x)$ to the set $B_2^{(1)}$ is of degree 2, say $a_1+a_2 x+a_3 x^2,$ and we can find $a_1,a_2,a_3$ from the 
nonerased values in this set of positions. We obtain $(v(x))|_{B_2^{(1)}}=17(1+x+x^2)$ and recover the erased value by taking $x=8.$ 

Now suppose that the first 5 coordinates in the vector $c_1$ are erased. The restriction of $v(x)$ to the set $D_1$ is of the form
$a_1+a_2x+a_3x^2+(a_4 +a_5x+a_6x^2)x^4.$ Since $x^4$ is constant on $B_i^{(1)}$, finding the coefficients amounts to recovering two copies
of a quadratic polynomial. We can find them from the sets $B_i^{(1)}, i=2,3,$ each of which contains 3 independent evaluations, and we obtain
$(v(x))|_{D_1}=2(1+x+x^2)(1+x^4).$ Finally, we correct the erasures by
evaluating this polynomial at the erased locations. 

Note that taking the set of functions in the form $\cF_1:=\{x^{4j+i},j=0,1,2,3; i=0,1,2\},$ we would obtain a $[36,12,22]$ code with locality $3$ that belongs
to the code family of \cite{tamobarg}. By changing the functional basis from $\cF_1$ to $\cF$, we reduce the distance to 18 in exchange for adding the hierarchical locality
property. \hfill\qed}


  \vspace*{.1in}
 
 By  increasing the degree of the map $\phi_2$ we can increase the distance $\rho_2$ from 2 to larger values so that each
small repair group is resilient to more than one erasure. More specifically, let $\rho_2\ge 2$, and let $r_1,r_2$ be such
that $r_1=sr_2$ and $((s+1)(r_2+\rho_2-1))|n.$ Let $\phi_1,\phi_2\in \kk[x]$ be polynomials constant on their respective
repair groups, and let $\deg(\phi_2)=r_2+\rho_2-1$ and $\deg(\phi_1)=(r_2+\rho_2-1)(s+1).$ 
 Define the set of functions $V=\text{span}_{\kk}(\phi_1^k\phi_2^jx^i)$ where the indices vary as in \eqref{eq:V1}.
 Finally, construct the code $\cC$ as the set of evaluations of the functions in $V$ on the points in $D$.
The properties of $\cC$ are summarized in the following form.
 \begin{proposition}\label{prop:many}
 The code $\cC$ has length $n,$ dimension $tr_1$ and distance 
   $$
   d_{\text{\rm min}}(\cC)=n-tr_1+1-(t-1)(r_2+\rho_2-1)-(ts-1)(\rho_2-1).
   $$
It is an optimal H-LRC code with two levels of hierarchy and locality parameters $(r_1, r_2+2\rho_2-1),(r_2,\rho_2).$ 
The middle codes $\cC_\alpha$ are optimal $[(s+1)(r_2+\rho_2-1),r_1,r_2+2\rho_2-1]$ LRC codes.
 \end{proposition}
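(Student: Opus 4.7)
The plan is to mirror the proof of Proposition~\ref{prop:RS}, with the polynomial $\phi_2$ of degree $m:=r_2+\rho_2-1$ replacing $y$ and $\phi_1$ of degree $(s+1)m$ replacing $f$. The monomials $\phi_1^k\phi_2^jx^i$ appearing in \eqref{eq:V1} have degrees $k(s+1)m+jm+i$ with $0\le i<m$ and $0\le j<s+1$, so these degrees are pairwise distinct by uniqueness of mixed-radix expansion; consequently $\dim V=tr_1$.

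Next I would extract the local parameters in two nested restriction steps. On a single repair group of size $m$, both $\phi_1$ and $\phi_2$ are constant, so every function in $V$ restricts to a univariate polynomial in $x$ of degree at most $r_2-1$. The local code is thus an MDS $[m,r_2,\rho_2]$ Reed-Solomon code, giving $\cC$ the required locality $(r_2,\rho_2)$. Restricting to a middle block $D_\alpha$ of size $\nu=(s+1)m$, only $\phi_1$ remains constant, so the restricted function is a polynomial in $x$ of degree at most $(s-1)m+r_2-1$. Subtracting from $\nu$ gives $d(\cC_\alpha)\ge 2m-(r_2-1)=r_2+2\rho_2-1$. Substituting $\nu,r_1=sr_2,r_2,\rho_2$ into \eqref{eq:sb} and simplifying, the right-hand side also equals $r_2+2\rho_2-1$, so each $\cC_\alpha$ is an optimal $[\nu,r_1,r_2+2\rho_2-1]$ LRC code.

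For the global parameters, the maximum degree of a nonzero element of $V$ is $(t-1)(s+1)m+(s-1)m+(r_2-1)=m(ts+t-2)+r_2-1$, giving
\[
d_{\min}(\cC)\ge n-m(ts+t-2)-r_2+1.
\]
Rewriting $m=r_2+(\rho_2-1)$ and collecting terms converts this into the claimed form $n-tr_1+1-(t-1)(r_2+\rho_2-1)-(ts-1)(\rho_2-1)$. Optimality then follows by plugging $k=tr_1$, $(r_1,\rho_1)=(sr_2,r_2+2\rho_2-1)$, and $(r_2,\rho_2)$ into \eqref{eq:sh}: with $\lceil k/r_1\rceil=t$ and $\lceil k/r_2\rceil=ts$, the hierarchical Singleton bound collapses to the same expression, so all three inequalities (local, middle, global) hold with equality simultaneously.

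The only genuine issue, inherited from Proposition~\ref{prop:RS}, is the existence of polynomials of degrees $m$ and $(s+1)m$ that are constant on the prescribed partitions of $D$; this is handled by the multiplicative-subgroup construction of \cite{tamobarg} by taking $D$ to be (cosets of) a subgroup of $\ff_q^\ast$ of order divisible by $(s+1)m$. Tightness of each degree-counting bound is then witnessed by the top monomial $\phi_1^{t-1}\phi_2^{s-1}x^{r_2-1}$, whose restrictions to a repair group, to a middle block, and to $D$ itself all attain the extremal degrees used in the estimates above.
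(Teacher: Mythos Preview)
Your argument is correct and is exactly the computation the paper expects the reader to carry out: the paper states Proposition~\ref{prop:many} without proof, as a direct extension of the calculation preceding Proposition~\ref{prop:RS} with $\deg\phi_2$ raised from $r_2+1$ to $r_2+\rho_2-1$. Your degree bookkeeping for the local, middle, and global codes, and the verification against \eqref{eq:sb} and \eqref{eq:sh}, are all right (the equality $d_{\min}=n-m(ts+t-2)-r_2+1$ is forced by the matching of the degree-count lower bound with the hierarchical Singleton upper bound, so your final paragraph about the top monomial ``witnessing tightness'' is superfluous---attaining the extremal \emph{degree} does not by itself produce a codeword of the extremal \emph{weight}, but you do not need it to).
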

 
 It is also possible to increase the degree of the map $\phi_1$ thereby increasing the distance of the codes $\cC_\alpha$ while still keeping the distance $\rho_2=2.$ Finally, it is possible to increase the degrees of both the maps $\phi_1,\phi_2,$ thereby increasing both $\rho_1$ and $\rho_2$. As is easily checked, the resulting codes still retain the optimality properties.
 

\section{H-LRC codes from automorphisms of curves}
While the previous section introduced a general construction of H-LRC codes on algebraic curves, so far we gave only one concrete
example that relies on maps between projective lines. To construct a class of examples, we develop the ideas put forward in a series of recent works in \cite{JinMaXing17,LiMaXing17b}, constructing towers of curves in the form of \eqref{eq:tow} from automorphism groups of curves.
Let $G$ be a subgroup of $\text{Aut}(X)$ with subgroup $H$ such that $|H|=r_2+1$ and $|G|=\nu.$
Let $\kk(X)^H$ be the set of $H$-invariant functions in $\kk(X)$ and let $\kk(X)^G$ be the same for $G$.
Consider the following tower of function fields:
  \begin{equation}\label{eq:ff}
  \kk(X)\xleftarrow{\phi_2^{\ast}}\kk(X)^H\xleftarrow{\phi_1^{\ast}}\kk(X)^G,
  \end{equation}
where $\phi_1^\ast,\phi_2^\ast$ are the embedding maps of the function fields. 
Let $g_1$ and $g_2$ be primitive elements of the extensions $\kk(X)^H/\kk(X)^G$ and $\kk(X)/\kk(X)^H,$ respectively. Choose places $Q=\{Q_1,\dots, Q_m\}$ of $\kk(X)^G$ that split completely in $\kk(X)$ (i.e., there are $\nu=(s+1)(r_2+1)$ places in $\kk(X)$ above each $Q_i$), and let $Q_{\infty}$ be a positive divisor with support disjoint from $Q$. Let $D$ be the collection of places in $\kk(X)$ above the places in $Q$.

Since \eqref{eq:ff} is a particular case of \eqref{eq:tow}, the general construction in \eqref{eq:hmap} applies. Using it, we obtain a code $\cC(D, \{\phi_1, \phi_2\})$ with parameters $[n,k,d]$ determined by Proposition \ref{prop:general}. Specifically, 
  \begin{equation}\label{eq:parameters}
  n=m\nu,\; k=r_2st, \;\;t:=\dim(L(Q_\infty))\ge 1,
  \end{equation}
the distance $d$ is bounded in \eqref{eq:d}, and the locality parameters equal $(sr_2,\rho_1),(r_2,2),$ where $\rho_1$ is given in
\eqref{eq:rho1}.
 
In what follows we give some specific examples.

\subsection{Automorphisms of rational function fields}
Let $\kk(X)=\kk(x)$ be a rational function field. Let us assume that $r_2$ and $s$ are such that there exists 
a subgroup $G$ of $\text{Aut}(X)=\text{PGL}_2(q)$ of order $(r_2+1)(s+1).$
We apply the construction \eqref{eq:ff} above to get a tower of rational curves 
\begin{equation*}
    X\xrightarrow{\phi_2}Y\xrightarrow{\phi_1}Z.
\end{equation*}
By construction, both the degrees of $x$ and $y$ are 1. We obtain
an H-LRC code $\cC$ with parameters $((r_2s, \rho_1),(r_2,2))$ where on account of \eqref{eq:rho1},
  $$
  \rho_1 \ge \nu-(s-1)(r_2+1)-(r_2-1)=r_2+3,
  $$ 
Note that this is in fact an exact equality because of the upper bound \eqref{eq:sb}.   
  Moreover, as is easily checked, the
  code $\cC$ as a whole meets the upper bound \eqref{eq:sh} with equality. We obtain:
  \begin{proposition} \label{prop:P1}
  Let $n\le q$ be a multiple of $(r_2+1)(s+1).$
Using construction \eqref{eq:ff} for the subgroups of the automorphism group of the rational function field, we obtain optimal 
$[n,k,d]$ H-LRC codes with parameters $((sr_2, r_2+3),(r_2,2)).$ 
\end{proposition}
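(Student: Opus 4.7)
The plan is to instantiate Proposition~\ref{prop:general} for the rational function field and verify that the resulting code saturates the hierarchical Singleton-type bound~\eqref{eq:sh}. Since $\kk(X)=\kk(x)$ is rational, L\"uroth's theorem ensures that both invariant subfields $\kk(X)^H$ and $\kk(X)^G$ are rational function fields, so $Y$ and $Z$ are copies of $\mathbb{P}^1$. I would choose the primitive elements $x$ and $y$ to be affine coordinates on $X$ and $Y$ respectively, giving $\deg(x)=\deg(y)=1$ and hence $\deg_{\psi}(x)=1$.

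Next, I need $m=n/\nu$ points of $Z(\kk)$ that split completely under $\psi$, equivalently $m$ free $G$-orbits on $\mathbb{P}^1(\kk)$. A non-identity element of $\mathrm{PGL}_2(q)$ has at most two fixed points on $\mathbb{P}^1(\kk)$, so the locus of points with nontrivial $G$-stabilizer has size at most $2(\nu-1)$, leaving at least $\lfloor(q+1-2(\nu-1))/\nu\rfloor$ free orbits---enough for $m$ orbits under the hypothesis $n\le q$ (implicitly taking $q$ large enough for this count, as is standard). I would then choose $Q_\infty=(t-1)P_\infty$ for a rational place $P_\infty\notin S$ so that $t=\dim L(Q_\infty)$. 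Applying Proposition~\ref{prop:general} produces a code of length $n=m\nu$, dimension $k=tsr_2=tr_1$, and middle codes $\cC_\alpha$ of length $\nu$, dimension $r_1$, and distance bounded below by $\rho_1\ge 2(r_2+1)-(r_2-1)=r_2+3$; the single-level bound~\eqref{eq:sb} applied to the $(r_2,2)$-LRC middle code forces $\rho_1\le r_2+3$, so equality holds.

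The remaining step is to check optimality of $\cC$ as a whole. Substituting $\deg(x)=\deg(y)=1$ and $\deg(Q_\infty)=t-1$ into~\eqref{eq:d} and using the identity $(t-1)(s+1)(r_2+1)+(s-1)(r_2+1)+(r_2-1)=t\nu-r_2-3$ gives
\[
d\ge n-\big((t-1)(s+1)+(s-1)\big)(r_2+1)-(r_2-1)=n-t\nu+r_2+3,
\]
while the upper bound~\eqref{eq:sh} evaluated at $\rho_1=r_2+3,\,\rho_2=2,\,k=tsr_2$, using $\lceil k/r_1\rceil=t$ and $\lceil k/r_2\rceil=ts$, simplifies to the same quantity. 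Hence $\cC$ meets~\eqref{eq:sh} with equality. The only routine obstacle is the arithmetic reconciliation of the lower bound~\eqref{eq:d} with~\eqref{eq:sh}; the orbit-counting step is uniform across the possible isomorphism types (cyclic, dihedral, $A_4$, $S_4$, $A_5$, $\mathrm{PSL}_2$, $\mathrm{PGL}_2$) of subgroups of $\mathrm{PGL}_2(q)$ since the fixed-point estimate does not depend on the structure of $G$.
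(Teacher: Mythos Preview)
Your proposal is correct and follows essentially the same approach as the paper: instantiate Proposition~\ref{prop:general} with $X=Y=Z=\mathbb{P}^1$, note that the primitive elements have degree~$1$ (you make this explicit via L\"uroth, the paper just asserts it), compute $\rho_1=r_2+3$ from~\eqref{eq:rho1} and the Singleton-type upper bound~\eqref{eq:sb}, and then check that $d$ meets~\eqref{eq:sh}. The paper leaves the last verification as ``easily checked'' while you carry out the arithmetic, and you add the orbit-counting argument for the existence of enough totally split places, which the paper omits entirely; note however that your fixed-point estimate actually gives $q+3-2\nu\ge n$ rather than $q\ge n$, a small discrepancy that the paper also does not resolve.
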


These codes are in fact from the same family as the codes constructed in Prop.~\ref{prop:RS}. However, we can extend this construction
to optimal H-LRC codes of length $q+1$ relying in part on the ideas in \cite{JinMaXing17}. Assume that $G<\text{PGL}_2(q), |G|=\nu|(q+1),$
then there exists a subset $S$ of $m:=(q+1)/\nu$ rational places of $\kk(X)^{G}$ that split completely in $\kk(X).$ Let $S=(Q_1,\dots,Q_m)\subset Z(\kk)$ and let $H<G,|H|=r_2+1.$ Let $P_\infty$ be the infinite place in $\kk(X).$ W.l.o.g. we can
assume that $P_\infty|Q_1.$ Let $(y)_\infty$ be the polar divisor of $y$ and assume that $\text{supp}((y)_\infty)\cap \phi_2^{-1}(S)=\emptyset.$ As above, let the set of evaluation points be $D=\cup_{i=1}^m\psi^{-1}(Q_i)$, and let 
the fiber above $Q_1$ be $P_{11}=P_\infty,P_{12},\dots,P_{1,\nu}.$ 
The code $\cC$ is constructed by evaluating the functions in
\eqref{eq:V} at the points in $D$. Specifically, $\cC$ is the image of the following map:
  $$
  f\in V \mapsto ((x^{-r_2+1} f)(P_{11}),f(P_{12}),\dots,f(P_{m\nu}))\in \kk^{q+1}.
  $$  
The idea of constructing codes on curves whose set of evaluation points $D$ includes the support of $Q_\infty$ (by multiplying
by an appropriate degree of the uniformizing parameters) has appeared in the literature, e.g., \cite[p.194]{TVN07}.  
\begin{proposition} The locality parameters of the code $\cC$ are $(sr_2,r_2+3),(r_2,2)$, making it into an optimal
2-level $q$-ary H-LRC code of length $q+1.$
\end{proposition}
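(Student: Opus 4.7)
The plan is to reduce to Proposition~\ref{prop:P1} as far as possible and then separately analyze what happens at the infinite place $P_\infty$. The point of the extra factor $x^{-r_2+1}$ at $P_{11}=P_\infty$ is exactly that among the basis functions $\{f_i y^j x^k\}$ spanning $V$, only $x^k$ contributes a pole at $P_\infty$ (the assumption $\mathrm{supp}((y)_\infty)\cap \phi_2^{-1}(S)=\emptyset$ kills the other potential poles, and the $f_i$'s are pulled from $Z$), and $x$ has a simple pole at $P_\infty$ of order one. Multiplying by $x^{-r_2+1}$ therefore exactly cancels the worst pole, so the proposed evaluation map is well-defined and $\kk$-linear on all of $V$. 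Together with Proposition~\ref{prop:P1} applied to the remaining $q$ points, this already handles length and dimension.

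Next I would verify the two local properties. For the \emph{repair group} $\{P_{11},\dots,P_{1,r_2+1}\}$ containing $P_\infty$: restricted to this fiber of $\phi_2$ the functions $f_i$ and $y$ are constants, so $V$ collapses to $\mathrm{span}_\kk\{x^k:0\le k\le r_2-1\}$. Evaluating $\sum_{k} a_k x^k$ at the $r_2$ affine points in the fiber and at $P_\infty$ through the twist, one gets $(\sum a_k x^k)(P_{1j})$ for $j\ge 2$ and $a_{r_2-1}$ at $P_{11}$, i.e.\ an extended Reed--Solomon $[r_2+1,r_2,2]$ code. So the local LRC property with $\rho_2=2$ survives at the infinite fiber. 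For the \emph{middle code} on the fiber $\psi^{-1}(Q_1)$ containing $P_\infty$: the $f_i$'s are constants on the fiber, leaving the bivariate evaluation of polynomials in $x,y$ of bidegree at most $(r_2-1,s-1)$; the analysis in Proposition~\ref{prop:general} goes through with the only modification being the extended evaluation at $P_\infty$, which raises the length by $1$ but does not decrease $\rho_1$, giving an $[(s+1)(r_2+1),\,sr_2,\,\ge r_2+3]$ middle code. Comparison with the Singleton-type bound \eqref{eq:sb} forces equality $\rho_1=r_2+3$, just as in Proposition~\ref{prop:P1}.

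Finally I would establish the global distance by the same counting as in Section~\ref{sec:RS}: the maximum ``degree'' of a function in $V$, after accounting for the $x^{-r_2+1}$ twist at the single point $P_\infty$, contributes at most one extra zero in total, so the number of simultaneous zeros of a nonzero $v\in V$ on $D$ is controlled exactly by \eqref{eq:d} with $n=q+1$. Plugging the parameters $n=q+1$, $k=tr_1$, $(r_1,\rho_1)=(sr_2,r_2+3)$, $(r_2,\rho_2)=(r_2,2)$ into the hierarchical Singleton bound \eqref{eq:sh} shows the resulting distance meets the bound with equality, which yields optimality.

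The main obstacle I anticipate is the middle-code verification on the fiber through $P_\infty$: one must check that the single twisted evaluation does not destroy the bivariate interpolation argument used to push $\rho_1$ up to $r_2+3$. The cleanest way I see is to apply a change of basis in $V$ that replaces $x^{r_2-1}$ by $x^{r_2-1}-\alpha x^{r_2-2}-\cdots$ making the top-degree term behave like the local uniformizer at $P_\infty$; then the restricted middle code on the infinite fiber becomes a one-point extended evaluation code, for which the distance inequality \eqref{eq:rho1} is a standard exercise. Once this is in place, optimality of the whole code follows from matching \eqref{eq:sh} as above.
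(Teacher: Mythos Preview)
Your approach is correct and aligns with the paper's, but you are considerably more thorough than the paper itself. The paper's proof is three sentences long and checks \emph{only} the small repair group containing $P_\infty$: it observes that $(x^{-r_2+1}f)(P_{11})=a_{r_2-1}$, so the erased value at $P_{11}$ comes from ordinary interpolation over the other $r_2$ points, while an erased value at some $P_{1j}$ with $j\ge 2$ is recovered by first reading off $a_{r_2-1}$ at $P_{11}$ and then Lagrange-interpolating the remaining $r_2-1$ coefficients from the $r_2-1$ surviving affine evaluations. Everything else (the middle code on the fiber $\psi^{-1}(Q_1)$, the global distance, and the match with \eqref{eq:sh}) is left implicit, presumably inherited from the $n\le q$ case of Proposition~\ref{prop:P1}.

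Your extra work on the middle code through $P_\infty$ and on the global distance is not wrong, but it is not what the paper does, and the ``obstacle'' you flag (that the twist might spoil the bivariate interpolation bound for $\rho_1$) is simply not addressed in the paper. One minor slip: the twisted evaluation does not ``raise the length by $1$''; the fiber $\psi^{-1}(Q_1)$ already has $\nu=(s+1)(r_2+1)$ points including $P_\infty$, so the middle code has length $\nu$ throughout. Otherwise your outline is a sound and more complete version of the paper's argument.
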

\begin{proof} We only need to check that the small (size-$(r_2+1)$) recovering set that contains $P_{11}$ supports local 
correction. If the erased coordinate is $P_{11},$ then its value can be found by regular polynomial interpolation. Otherwise,
observe that the function $f$ on this set has the form 
   $f(x)=\sum_{k=0}^{r_2-1}a_kx^k,$ where $a_k$'s are constants. Observe that $a_{r_2-1}=(x^{-r_2+1} f)(P_{11}).$ The remaining
   $r_2-1$ coefficients of $f$ can be found by Lagrange interpolation from the other $r_2-1$ evaluations of $f$ in this set.
\end{proof}
For instance, one can take $n=q+1=28,$ obtaining an optimal $[28,6t,d=37-14t]$ H-LRC code over $\ff_{3^3}$ with locality parameters 
$(r_1=6,\rho_1=9),(r_2=6,\rho_2=2).$ Nontrivial examples arise when $t=1,2,$ and we obtain codes with the parameters $[28,6,23],[28,12,9]$ that meet the bound \eqref{eq:sh}.

\section{H-LRC Codes of Length $n> q+1$ constructed from elliptic curves} \label{sect:elliptic}
\subsection{LRC codes from quotients of elliptic curves} 
Li et al.~\cite{LiMaXing17b} introduced a construction of optimal LRC codes on elliptic curves obtained from quotients of the elliptic curve by 
subgroups of automorphisms. We present this construction in this section and extend to H-LRC codes in the next one.

Let $E$ be an elliptic curve over $\kk=\mathbb{F}_q$ and let $G$ be a subgroup of the automorphism group $\text{Aut}(E)$. 
Note that the automorphism group is the largest for $\text{char}(\ff_q)=2,3$, and therefore examples given in \cite{LiMaXing17b}
are given for these cases. In the H-LRC case, since two levels of hierarchy are required, most useful examples arise 
in the characteristic 2 case when the automorphism group is of size 24 \cite{Silverman09}.

Let us assume that 
$|G|=r+1=2s.$ 
Denote the coordinate 
functions of the automorphisms in $G$ by $\sigma_i((x,y))=(f_i(x,y), g_i(x,y))$. Assume that the set of 
$x$-coordinate functions $f_i$ has size $s$ (in the case of odd characteristic this can be achieved by including
in $G$ the negation map on $y$, i.e., the automorphism $\sigma:(x,y)\to(x,-y)$). Let us index the automorphisms $G=\{\sigma_1, \cdots , \sigma_{r+1}\}$ so that
to ensure that $f_{i+s}(x,y) = f_{i}(x,y)$. Finally, let us 
assume that there is a point $P=(a,b)$ on $E$ such that the points $P_i = \sigma_i(P)$ are distinct, i.e., $P$ is contained in a 
totally split fiber of the covering map $\phi: E \to E/G$.

Let us define a function 
  $$ 
  z(x,y)=\displaystyle \prod_{i=1}^s \frac{1}{f_i(x,y)-a}. 
  $$
First we note that $\sigma_i(z) = z$ for all $1\le i \le r+1$. This means that $z$ can be thought of as a function in $k(E/G)$. More importantly, this implies that $z$ is constant on fibers of the covering map $\phi$. 
Powers of the function $z$ will take the place of the functions in the Riemann-Roch space $L(D)$ in the 
general construction of Sec.~\ref{sec:curves}.
 Also note that the divisor of $z$ is
   $$
   (z) = (r+1)\infty - P_1 \cdots - P_{r+1}.
   $$ 
Define functions $w_0=1$ and $w_i, i=1,\dots,r-1$ in 
$L_i =  L(P_1+\cdots+P_{i+1})$ such that $L_i = \text{span}\{1, w_1,\cdots, w_i\}$ for $1 \le i \le r-1$. Such a choice
is always possible by the Riemann-Roch theorem. Define the space of functions used to construct the code as follows:
$$ 
V = \text{span}\{ (z^{t}, w_iz^j)\, |\,i=1,\dots,r-1, 0\le j\le t-1 \}.
$$
Let $Q = \{Q_1,\ldots, Q_n\}$ be a union of totally split fibers of the covering map $\phi$ that does include the fiber formed
by the points $P_i.$
The LRC code is obtained from the evaluation map
\begin{align*}
\text{ev}:&V\to \kk^n\\
&f \mapsto (f(Q_1), \ldots, f(Q_n))
\end{align*}
As shown in \cite{LiMaXing17b}, the resulting codes are optimal with respect to \eqref{eq:sb2}. The recovering sets of the code are coordinates contained in the same fiber of $\phi$. Restricted to a fiber of $\phi,$ a function in $V$ becomes just a linear combination of the $r$ linearly independent functions $w_i$, enabling one to recover the missing coordinate.


{\em Remark:} Even though \cite{LiMaXing17b} did not go beyond the genus 1 case, the above 
construction can be extended to curves of genus 2 with a only 
few changes to the definition of the $w_i$'s. Namely, take $w_0=1$ as before and let $w_i$ to 
be a nontrivial function in $L_i=L(P_1+\cdots+P_{i+2})$ such that $L_i=\text{span}\{w_0,\ldots,w_i\}$ The advantage in applying this construction to genus 2 curves is that they can have larger automorphism groups and more rational points, allowing greater flexibility in choices of the parameters. In particular, paper \cite{LiMaXing17b} gives examples of the above construction for maximal elliptic curves that result in optimal LRC codes of length close to $q+2\sqrt{q}$. With genus two curves we can easily construct 
optimal LRC codes of length $n$ close to $q+4\sqrt{q},$ which constitute a family of optimal LRC 
codes of length larger than reported in the literature (apart from the case of $d=3,4$ in \cite{LuoXingYuan18}). At the same time,
so far we have not been able to extend this observation to the case of H-LRC codes.

\subsection{H-LRC Codes from quotients of elliptic curves} Let $E$, $G$, $\{P_i\}$, $\{Q_i\}$ and $z$ be as above. Additionally choose a subgroup $H \le G$ of order $r_2+1$. Let $\bar{P_i}$ be the point on $E/H$ below $P_i$. Let $m+1:=(r+1)/(r_2+1)$ and suppose the $P_i$ are enumerated such that $\bar{P}_1, \ldots , \bar P_{m+1} $ are all distinct.

If $E/H$ is of genus 1, we take $w_0 = 1$ and $w_i$ to be a function in $\bar L_i = L(\bar P_1+\cdots + \bar P_{i+1})$ for $1\le i \le m-1$ such that $\bar{L_i} = \text{span}\{1, w_1, \ldots, w_i\}$ as before. Otherwise, if the genus of $E/H$ is 0, we take
 $w_0 = 1$ and $w_i$ to be a function in $\bar L_i =L(\bar P_1+\cdots + \bar P_{i+1})$ for $1\le i \le m-1$ such that $\bar L_i  = \text{span}\{1, w_1, \ldots, w_i , w'_i\},$ where $w'_i$ is any additional linearly independent function in the Riemann-Roch space
 $\bar L_i.$ Note that none of the $w_i$'s have poles at $\bar P_{m+1} $.

Let $P_{m+1,1}, \ldots, P_{m+1, r_2+1}$ be the points on $E$ above $\bar P_{m+1}$. Take $y_0=1$ and $y_i$ to 
be a function in $L_i=L(P_{m+1, 1}+ \cdots + P_{m+1, i+1})$ such that $L_i = \text{span}\{1,y_1,\ldots,y_i\}$. For clarity, we will define the space of functions in two steps. Define $V'$ and $V$ as follows:
   \begin{gather*}
V' = \text{span}\{w_{m-1}, w_jy_k | 0 \le j \le m-2, 0\le k \le r_2-1\} \\
V  = \text{span}\{z^t, z^ig | 0\le i \le t-1, g\in V'\}
\end{gather*}
Once again the code $\cC$ is obtained by evaluating the points in $Q$ at all the functions in $V$. 
Construct the code $\cC$ evaluating the functions in $V$ at the points in $Q$ (cf.~\eqref{eq:hmap}).
   \begin{proposition}
The code $\cC$ constructed above is an $[n,k,d]$ H-LRC code with locality parameters $ ((r_1, \rho_1),(r_2,\rho_2=2)) $ where
    \begin{align*}
    &r_1 = r_2(m-1)+1\\
    &r_2+1\leq \rho_1 \leq 2r_2+2\\
    &k=t(r_2(m-1)+1)+1\\
    &d \geq n-(t(m+1)(r_2+1)-(r_2+1)).
\end{align*}
\end{proposition}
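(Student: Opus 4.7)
The plan is to verify each of the four parameter claims in turn, exploiting the tower $E\xrightarrow{\phi_2}E/H\xrightarrow{\phi_1}E/G$ together with Riemann--Roch on $E$ and on $E/H$. First I would establish the middle-code dimension $r_1=r_2(m-1)+1$ and the $(r_2,2)$-locality structure. On a small repair group (fiber of $\phi_2$), each of $w_{m-1}$ and $w_j$ is $H$-invariant and hence constant, so the restriction of $V'$ lies in $\mathrm{span}\{y_0=1, y_1,\ldots,y_{r_2-1}\}$. A nonzero element of this span is a function on $E$ with pole divisor of degree at most $r_2$, hence has at most $r_2$ zeros, so these $r_2$ functions are linearly independent on the $(r_2+1)$-point small group (which avoids the poles $\{P_{m+1,i}\}$). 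This yields the single-parity-check structure on each small group and hence $\rho_2=2$. To upgrade to injectivity of $V'\to V'|_{\phi^{-1}(Q_\iota)}$, a vanishing combination $cw_{m-1}+\sum_{j,k}a_{jk}w_jy_k\equiv 0$ on the fiber decouples on each small group, forcing $\sum_j a_{jk}w_j$ and $cw_{m-1}+\sum_j a_{j,0}w_j$ to vanish at all $m+1$ distinct points on $E/H$ above $Q_\iota$; Riemann--Roch on $E/H$ (bounding the zeros of a nonzero element of $\bar L_{m-2}$ by $m-1$ and of $\bar L_{m-1}$ by $m$) then forces every $a_{jk}$ and $c$ to vanish.

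Next, for the middle-code distance: the upper bound $\rho_1\le 2r_2+2$ is the Singleton-type bound \eqref{eq:sb} applied with $\lceil r_1/r_2\rceil = m$. For $\rho_1\ge r_2+1$, I would split a nonzero $\tilde g=cw_{m-1}+\sum_k h_ky_k\in V'$ by whether all $h_k$ for $k\ge 1$ vanish. In the pullback case, $\tilde g\in\bar L_{m-1}$ has at most $m$ zeros on $E/H$, and each lifts to a full small repair group on the evaluation fiber, bounding the number of zeros there by $m(r_2+1)=\nu-(r_2+1)$. Otherwise, the poles of $\tilde g$ all lie on the distinguished fiber $\phi^{-1}(\bar P)$, disjoint from the evaluation fiber, and one combines the Riemann--Roch bound on $E$ with the local $(r_2,2)$-LRC structure on each small group to recover the same bound.

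Finally, for the overall dimension and distance: since $z\in\kk(E/G)$ is constant on each fiber of $\phi$ with value $\alpha_\iota=z(Q_\iota)$, a codeword $F=c_tz^t+\sum_{i=0}^{t-1}z^ig_i\in V$ restricts on $\phi^{-1}(Q_\iota)$ to $\tilde g_\iota|_{\phi^{-1}(Q_\iota)}$, where $\tilde g_\iota=c_t\alpha_\iota^t+\sum_i\alpha_\iota^ig_i\in V'$. If $F|_D\equiv 0$, the injectivity above forces $\tilde g_\iota=0$ in $V'$ for each fiber, and expanding $\tilde g(\alpha)=c_t\alpha^t+\sum_i\alpha^ig_i$ in a basis of $V'$ yields polynomials in $\alpha$ of degree at most $t$ vanishing at the $M:=n/\nu$ distinct values $\alpha_\iota$; for $M\ge t+1$ this forces $c_t=0$ and all $g_i=0$, giving $k=1+tr_1$. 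The same polynomial argument limits the number of fibers with $\tilde g_\iota=0$ to at most $t$, so on the remaining fibers $\tilde g_\iota$ is a nonzero element of $\cC_\alpha$ contributing at most $\nu-\rho_1$ zeros; combining with the pole-divisor computation on $E$ (where, for instance, $z^{t-1}w_{m-1}$ has pole divisor of degree exactly $t\nu-(r_2+1)$) yields the claimed $d\ge n-t\nu+(r_2+1)$.

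The main obstacle I anticipate is the non-pullback case of the lower bound $\rho_1\ge r_2+1$: the naive pole-degree bound on $E$ is $\deg(\tilde g)_\infty\le\nu-1$, which only yields distance $\ge 1$, so recovering the additional factor of $r_2$ requires exploiting both the layered structure of the tower and the specific choice of the $w_j, y_k$ bases.
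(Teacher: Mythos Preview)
Your plan is substantially more detailed than the paper's own proof, which is a short Riemann--Roch pole count. The paper argues that every function in $V'$ lies in (the pullback to $E$ of) $L(\bar P_1+\cdots+\bar P_m)\cup L(\bar P_1+\cdots+\bar P_{m-1}+\bar P_{m+1})$---two divisors of degree $m$ on $E/H$---and hence has at most $m(r_2+1)$ zeros on $E$, giving $\rho_1\ge\nu-m(r_2+1)=r_2+1$ directly. For $d$ it asserts the analogous containment $V\subseteq L\bigl(t\sum_i P_i-\phi_2^*\bar P_{m+1}\bigr)\cup L\bigl(t\sum_i P_i-\phi_2^*\bar P_m\bigr)$ and reads off the bound. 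The upper bound $\rho_1\le 2r_2+2$ is just~\eqref{eq:sb2}, and $k=\dim V$ is stated as immediate.

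You have correctly put your finger on the weak point: a mixed element $cw_{m-1}+\sum_{j,k}a_{jk}w_jy_k$ with $c\ne 0$ and some $a_{jk}\ne 0$ for $k\ge 1$ carries poles both above $\bar P_m$ (from $cw_{m-1}$) and at the $P_{m+1,i}$ (from the $y_k$), so it lies in \emph{neither} of the two Riemann--Roch spaces the paper names; the naive pole-degree bound is then $m(r_2+1)+r_2$, giving only $\rho_1\ge 1$. Your case split is exactly what is needed to close this, and the paper's terse ``$\cup$'' hides the extra work you anticipate.

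One correction to your plan for $d$: the fiber-by-fiber count (at most $t$ all-zero fibers, and $\ge\rho_1$ nonzeros on each remaining fiber) yields only $d\ge(M-t)\rho_1$, which is far weaker than $n-t\nu+(r_2+1)$. To reach the stated bound you should follow the paper and bound the global pole divisor of $F$ on $E$ directly---but then the same mixed-case issue reappears for $c_tz^t$ versus the remaining terms. In fact, since $z$ descends to a degree-$1$ function on $E/G$, the element $\prod_{\iota=1}^{t}(z-\alpha_\iota)\in V$ vanishes on $t$ entire evaluation fibers and gives a codeword of weight exactly $n-t\nu$; you should check whether the extra $+(r_2+1)$ in the stated $d$ bound actually survives this example.
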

\begin{proof}
The middle codes have length $\nu = (m+1)(r_2+1)$ and dimension $r_1 = \dim(V') = r_2(m-1) + 1$ since the function $z$ is constant on the fibers of $E\to E/H$. Also, restricted to a fiber, the functions in $V'$ 
are contained in $L(\bar P_1+ \cdots + \bar P_{m}) \cup L(\bar{P}_1+\cdots + \bar{P}_{m-1} + \bar{P}_{m+1}) $. This implies
 that the minimum distance of the middle codes satisfies 
 $\rho_1\ge \nu - m(r_2+1) = r_2+1$. The upper bound on $\rho_1$ follows from the Singleton bound \eqref{eq:sb2}.
 
The value of the dimension $k$ follows directly from the construction. Finally, since 
$V\subseteq L(t(P_1 + \cdots + P_{r+1}) - \bar{P}_{m+1})\cup L(t(P_1 + \cdots + P_{r+1}) - \bar{P}_{m})$ we have 
$$d>n - (t(m+1)(r_2+1) - (r_2+1)).$$
\end{proof}

\vspace*{-.2in}

\subsection{Examples:} {For any even $m$ there exists $\gamma\in \ff_{2^m}$ such that the elliptic curve $E: y^2 + y = x^3 + \gamma$ 
is maximal} in the sense that the number of rational points on $E$ meets the Hasse-Weil bound \cite[Lemma 3.3]{LiMaXing17b}. The automorphism group of $E$ is of order 24, which is also maximal since an elliptic curve can have at most 24 automorphisms. The automorphisms are given by the following coordinate functions:
   $$
    \sigma_x(x,y)=u^2x+s,\\\quad
    \sigma_y(x,y)=y+u^2sx + t,
   $$
where $u^3=1, s^4+s=0, t^2+t+s^6=0$. The subgroup $G$ of Aut$(E)$ given by restricting $s$ to be 0 or 1 is order 12 and we take $H$ to be the order 4 subgroup of $G$ given by further restricting $u$ to be 1. By the Riemann-Hurwitz \cite[p.37]{Silverman09} formula we have
$$
    2g(E)-2 \ge 2g(E/G)-2 + \sum_{P\in E(K)}(e_P-1),
$$
where $g(E)$ and $g(E/G)$ are the genus of $E$ and of $E/G,$ respectively, and $e_P$ is the ramification index of the point $P$. 
Note that we use the Riemann-Hurwitz formula in the inequality form because in characteristic 2 some of the points are
wildly ramified. For instance, let us take $q=64.$
Since the point at infinity is totally ramified, the above equation implies that in the worst case
there are 13 additional ramified affine points on $E$ and therefore, there are at least 67 unramified points.
Since the order of $G$ is 12, this implies that there are in fact at least 72 unramified points. This results in 
at least $60$ evaluation points on $E$. The general code construction in this case gives an $[n=60, k=4t+1, d] $ H-LRC code with 
locality parameters $((4,\rho_1),(3,2))$ where $4\le\rho_1\le7$ and
  \begin{gather*}
    d \ge n-12t+4,     1\le t\le5.
\end{gather*}
Note that we do not have enough information to determine the distance of the ``middle'' codes $C_1$, making it difficult to compare
the value of $d$ with the upper bound \eqref{eq:sh}. Substituting $\rho_1=4$, we obtain 

\vspace*{.1in}
\begin{center}\begin{tabular}{ccc}
$t$ &$k$ &$d$\\
1 &5&$52\le d\le 53$\\
2 &9&$40\le d\le 46$\\
3 &13&$28\le d\le 38$.
\end{tabular}
\end{center}\vspace*{.1in}
 To obtain examples of length $n> q$, we should take a larger-size field, for instance let us take $\ff_{256}$. Applying the same arguments as above, we obtain H-LRC codes with 
 parameters $[264, 4t+1, d]$ and locality $((4,\rho_1),(3,2))$ where $4\le\rho_1\le 7$ and
 \begin{gather*}
    d \ge n-12t+4, 1\le t\le 22.
\end{gather*}

\section{Some families of curves and associated H-LRC codes}\label{sect:examples}
While Proposition \ref{prop:general} gives a general approach to constructing H-LRC codes, estimating the parameters 
for a given curve is a difficult question, in particular because controlling the multiplicity $\deg_{\psi}(x)$ in \eqref{eq:rho1} is
not immediate. The largest distance $\rho_1$ is obtained if the function $x$ is injective on the fibers of $\psi$, i.e., if $\deg_{\psi}(x)=1.$ In this section we present two general constructions that make this possible using properties of the automorphism groups of 
curves. Thus, all the H-LRC code families constructed below in this section share the property of having distance-optimal middle codes.

\subsection{Kummer curves} The simplest and at the same time rather broad class of examples arises when $G < \text{Aut}(X)$ is a cyclic group of order not divisible by the characteristic, i.e., when $X$ is a Kummer curve.

Recall that a Kummer curve $X$ over $\kk=\ff_q$ is defined by the equation
   \begin{equation}\label{eq:Kummer}
   y^m=f(x),
   \end{equation}
   where $m|(q-1)$ and $f(x)\in K:=\ff_q(x)$  \cite[pp.122ff.]{Stichtenoth09}, \cite[p.168]{TVN07}. The field $L:=\ff_q(x,y)$ is a degree $m$ cyclic extension of $K$, and any cyclic extension
   of degree $m$ can be written in this form. The following examples of Kummer curves are maximal and lead to H-LRC codes
   with good parameters.
\begin{enumerate}
\item The {\em Hermitian curve} $X:y^{{q_0}+1}=x^{q_0}+x$ over the field $\ff_q,q=q_0^2$ is a maximal Kummer curve.
\item The {\em Giulietti-Korchm{\'a}ros curves} \cite{GK09} are given by the affine equation
   $$
   y^{q_0^3+1}=x^{q_0^3}+x-(x^{q_0}+x)^{q_0^2-q_0+1},
   $$
and have genus $g=\frac12(q_0^3+1)(q_0^2-2)+1.$ They are maximal over $\ff_q$ for $q=q_0^6$. 
\item (The {\em Moisio curves} \cite{Moi04})  Let $h\in \{0,\dots,l\}$, let $m|(q_0^l+1)$ and let $q=q_0^n.$
Let $L$ be an $\ff_{q_0}$-subspace of dimension $h$ in $\ff_q$ and suppose that
  $$
  \prod_{\alpha\in L}(x-\alpha)=\sum_{i=0}^h a_ix^{q_0^i}.
  $$
Let 
    $$
      R(x)=\sum_{i=0}^h a_i^{q_0^{2n-i}}x^{q_0^{h-i}}.
    $$
Then the curve given by $y^m=R(x)$ is maximal over $\ff_{{q^2}}$ of genus $(m-1)(q_0^h-1)/2,$ so 
  $$
  |X(\ff_q)|=q_0^{2l}+(m-1)(q_0^{l+h}-q_0^l)+1.
  $$
\end{enumerate}

%
%

{Let $G_0=\text{Gal}(L/K)$ be the cyclic group of order $m$. The action of $G_0$ on the curve $X$ is given by $(x,y)\mapsto(x,\alpha y),$ where $\alpha\in \ff_q,\alpha^m=1.$ If $m$ is well-decomposable, say $m=(a+1)(b+1)c,$ then one can easily
find subgroups
   $H<G<G_0\subseteq \text{Aut}(X)$
with desirable properties. Indeed, let $m$ be as above and let $\alpha$ be a generator of $G_0$. Then we can take $G=\langle\alpha^c
\rangle, H=\langle\alpha^{(b+1)c}\rangle, |G|=(a+1)(b+1), |H|=a+1.$
It is clear that the invariants of any subgroup of $G_0$ are generated by powers of $y,$ for instance, from \eqref{eq:Kummer}, $y^{a+1}$ is fixed by
any power of $\alpha^{(b+1)c}$, etc. }

{Specializing the construction \eqref{eq:ff}, we obtain
  $$
  \kk(X)=\kk(x,y)\hookleftarrow \kk(X)^H=\kk(x,y^{a+1})\hookleftarrow \kk(X)^G=\kk(x,y^{(a+1)(b+1)})
  $$
Now it is clear that the primitive element $y$ is injective on the fibers of $\phi:X\to X/G$, and we can use the general code construction with $\deg_\psi(y)=1.$ }

{Using the general construction of Proposition \ref{prop:general} for the curves listed above, we obtain 
several families of H-LRC codes. The case of Hermitian curves is analyzed below in Section \ref{sect:GS} in the context
of power maps (see
Example \ref{ex:H}). 

Turning to the Giulietti-Korchm{\'a}ros curves, we observe that the total number of rational points on the curve $|X(\ff_q)|$ equals $q_0^8-q_0^6+q_0^5+1$ (which meets the Hasse-Weil
bound $N(X)\le q+1+2\sqrt q g$). {Setting aside the point at infinity, we observe that the projection map on $x$ is ramified in at
most $q_0^3$ places, leaving $n\ge q_0^8-q_0^6+q_0^5-q_0^3$ totally split places which form the evaluation set $D$.} Now we use 
Proposition \ref{prop:general} to claim the existence of H-LRC codes with the following parameters:
\begin{gather*}
n\ge (q_0^5-q_0^3)(q_0^3+1) 
, \; k=\dim(L(Q_\infty))ab\\
d\ge n-\deg(Q_{\infty})(a+1)(b+1)-q_0^3(ab+b-2)\\
  r_2 = a,\hspace{5pt} \rho_2 = 2\\
    r_1 = ab,\hspace{5pt} \rho_1 = a+3.
\end{gather*}
(note that $\deg(y)=q_0^3$). 
To obtain specific examples, we may take $q_0=4,$ getting $a=4,b=12,c=1$ or $q_0=17,$ in which
case the decomposition $q_0^3+1=2\cdot27\cdot7\cdot 13$ leaves multiple options for the localities of the codes, etc.}
We note that the distance of the middle codes is the largest possible, meeting the bound \eqref{eq:sb} with equality.

For the Moisio curves, the size of the ramification set is at most $q_0^h$, leaving at least $q_0^{2l}+(m-1)(q_0^{l+h}-q_0^l)-q_0^h$
points for the evaluation set $D$.
The codes from the Moisio curves {are constructed over $\ff_{q^2}$} and have the following parameters:
\begin{gather*}
n\ge q_0^{2l}+(m-1)(q_0^{l+h}-q_0^l)-q_0^h,\; k=\dim(L(Q_\infty))ab\\
d\ge n-\deg(Q_{\infty})(a+1)(b+1)-q_0^h(ab+b-2)\\
  r_2 = a,\hspace{5pt} \rho_2 = 2\\
    r_1 = ab,\hspace{5pt} \rho_1 = a+3.
\end{gather*}
For instance, we can take $q_0=2,l=5,$ and then taking $m=q_0^l+1$, we obtain H-LRC codes with localities $r_1=10,r_2=20,$ etc.

\subsection{Artin-Schreier curves} Let $q=q_0^e$ for some $e\in{\mathbb N}.$ A curve with the affine equation 
   \begin{equation}\label{eq:as}
   y^{q_0}-y=f(x)
   \end{equation}
for $f(x)=\ff_q(x)$ is called an {\em Artin-Schreier curve} \cite[pp.127ff.]{Stichtenoth09}, \cite[p.173]{TVN07}. More generally, a {\em generalized} Artin-Schreier
curve is given by the equation
   $$
   P(y)=f(x),
   $$
   where $P(y)=a_uy^{q_0^u}+a_{u-1}y^{q_0^{u-1}}+\dots +a_0 y, a_0\ne 0$ is a linearized polynomial whose roots form a linear subspace of $\ff_q$. Such a curve $X$ forms a Galois covering of the projective line with the Galois group $G_0:=\text{Gal}(X/{\mathbb P}^1)
   \cong {\cL}(P)$ where $\cL(P)$ is a linear space of roots of $P(y)$ in $\ff_{q_0}$ (thus, for coverings of the form \eqref{eq:as}, $G_0\cong \ff_{q_0^u}^+$). The group $G_0$ acts on the points of $X$ by $(x,y)\mapsto(x,y+\alpha)$ for $\alpha\in G_0.$
Artin-Schreier covers give many examples of curves that are either maximal or close to maximal. Examples of maximal curves include
the following families.
\begin{enumerate}
\item The Hermitian curves given by the equation $y^q+y=x^{q+1}$ over $\ff_{q^2},$
\item The Moisio curves (to see that they are Artin-Schreier, interchange $x$ and $y$ in their definition).
\end{enumerate}

 These examples are maximal in the sense that they attain the Hasse-Weil bound on the number of points.

\begin{enumerate}
\item[(3)] The Suzuki curves given by 
   $$
   S_q: y^{q}+y=x^{q_0}(x^q+x)
   $$
where $q_0=2^n,q=2^{2n+1}$ \cite{Hansen90}. The genus $g(S_q)=q_0(q-1)$ and the number of $\ff_q$-points is $N(S_q):=|X/\ff_{q}|=q^2+1$
(i.e., they fill the entire affine plane over $\ff_q$). The Suzuki curves are maximal because $N(S_q)$ meets the Oesterl{\'e} bound
for their genus. 
The full group $\text{Aut}(S_q)$ is the Suzuki group (hence the name), and it contains a subgroup isomorphic to $\ff_q^+$ which 
acts as before by $y\mapsto y+\alpha.$
\end{enumerate}

In each of the cases (1)-(3) above we have
   $$
   \text{Aut}(X)\supseteq G\cong (\integers/p\integers)^{e_2}\supset H\cong(\integers/p\integers )^{e_1}
   $$
for $q=p^e\ge 9$ and some exponents $e,e_1,e_2.$

Determining the primitive elements of the extensions in \eqref{eq:ff} with the above choice of $G$ and $H$ is generally not an easy question. We limit ourselves to two
simple examples.

\begin{enumerate}
\item Let 
  \begin{equation}\label{eq:X}
  X: \;y^q-y=f(x)
  \end{equation}
where $q=r^2,r=p^m\ge 3,$ and let $G\cong(\integers/p\integers)^{2m}, H\cong (\integers/p\integers)^m.$ In this case $G$ acts on $\kk(x,y)$ by fixing $\kk(x)$, i.e., we have $Z={\mathbb P}^1$ in \eqref{eq:tow} or $\kk(x,y)^G=\kk(x)$ in \eqref{eq:ff}. Let $H$ be a copy of $(\integers/p\integers)^m$ in $\ff_q^+$ with the property that $\alpha^r=-\alpha$ for all $\alpha\in H.$ 
In other words, $G\cong \ff_r^+\oplus\alpha \ff_r^+, H\cong \alpha\ff_r^+,$ where $\alpha^r=-\alpha.$ Further, let $z=y^r+y,$ then $z$ is invariant under the action $y\mapsto y+\alpha:$
   $$
   (y+\alpha)^r+(y+\alpha)=y^r+y=z.
   $$
 Further, 
  $$
  z^r-z=(y^r+y)^r-(y^r+y)=y^q-y=f(x),
  $$  
and thus, $\kk(x,y)^H=\kk(x,z),$ and \eqref{eq:ff} takes the form $\kk(x,y)\supset \kk(x,z) \supset \kk(x).$

On account of \eqref{eq:parameters}, we obtain a family of 2-level $[n,k,d]$ H-LRC codes, where $n=m\nu, k=r_2st,$ and $\nu=r^2,r_2=s=r-1,r_1=(r-1)^2,\rho_1=r+2,\rho_2=2.$


\vspace*{.1in}\item Let us again take $X$ in the form \eqref{eq:X} where this time $q=r^3,r=p^m\ge 3,$ and let $G\cong (\integers/p\integers)^{3m},H\cong (\integers/p\integers)^{m}.$ Let $z=y^r-y$ and note that $z$ is fixed by the action of $H$ on $\kk(x,y),$
and thus $\kk(x,y)^H=k(x,z).$ Further,  
   \begin{align*}
   z^{r^2}+z^r+z=y^q-y=f(x).
   \end{align*}
The tower \eqref{eq:ff} has the form $\kk(x,y)\supset \kk(x,z) \supset \kk(x)$ since $G$ fixes the rational function field in $\kk(x,y).$

On account of \eqref{eq:parameters}, we obtain a family of 2-level $[n,k,d]$ H-LRC codes, where $n=m\nu, k=r_2st,$
and $\nu=r^3,r_2=r-1,s=r^2-1,r_1=sr_2,\rho_1=r+1,\rho_2=2.$

This example can be further generalized to the curve $X$ of the form \eqref{eq:X}, where $q=r^h, r=p^m$ and
$G\cong \ff_q^+, H\cong\ff_r^+.$
The tower \eqref{eq:ff} that gives rise to the code family, has the form $\kk(x,y)\supset \kk(x,z) \supset \kk(x),$ where $z=y^r-y$ and
   $$
   z^{r^h}+z^{r^{h-1}}+\dots+z=y^q-y=f(x)
   $$
We obtain a family of 2-level H-LRC codes with the parameters $\nu=r^h,s=r^{h-1}-1,r_2=r-1,r_1=sr_2,\rho_1=r+2, \rho_2=2.$
\end{enumerate}

\begin{remark} One can consider ``mixed'' Artin-Schreier--Kummer curves of the form $P(y^m)=f(x)$ over $\ff_q$ where $P$ is a
linearized polynomial and $m|(q-1),$ and apply arguments similar to the above. However, we are not aware of good examples of
such curves although is it likely that they exist.
\end{remark}
\begin{remark} It is also clear that the above construction can be generalized to more than two levels of hierarchy. Accomplishing this
depends on the factorization of $q-1$ for the Kummer case and does not require new algebraic ideas. A similar observation applies
to the Artin-Schreier case.\end{remark}

\section{H-LRC codes from the Garcia-Stichtenoth tower} \label{sect:GS}
In this section we use the general construction of H-LRC codes for curves in the GS tower. We begin by directly applying
the idea of Section \ref{sec:curves} and consider mappings
between the curves two levels apart in the tower, viz. \eqref{eq:tow}. This approach meets a complication in that it is not easy
to find the multiplicity $\deg_\psi(x).$ We circumvent this difficulty using power maps in Section~\ref{sec:power}, which are 
related to the constructions from Kummer covers in the previous section.

\subsection{Naive construction}
Let $q=q_0^2$ be a square and $\kk=\mathbb{F}_q$. For any $l \geq 2$ define the curve $X_l$ inductively as follows:
 \begin{equation}
\left.\begin{array}{c}
 x_0 := 1, X_1 = \mathbb{P}^1, \kk(X_1)=\kk(x_1);  \\[.05in] X_l: z_l^{q_0} + z_l = x_{l-1}^{q_0+1}, \text{ where } \text{ for } l \geq 3 \\ [.05in]
 x_{l-1} := \frac{z_{l-1}}{x_{l-2}}.
 \end{array}\right\} \label{eq:gsdef}
 \end{equation}
The curves $X_l,l\ge 2$ form a tower of asymptotically maximal curves \cite{garcia95}.

The authors of \cite{BTV17} constructed LRC codes from covering maps between consecutive curves in this tower.
Similarly, we will construct H-LRC codes with 2-fold hierarchy by extracting sub-towers of 3 curves from the full tower. 
Let $\phi_l:X_l\to X_{l-1}$ be the natural projection on the coordinates $x_i,i=1,\dots,l-1.$
Consider the following subtower of curves with their projection maps:
\begin{equation}\label{eq:gsmap}
X_{j+2} \xrightarrow{\phi_{j+2}} X_{j+1} \xrightarrow{\phi_{j+1}} X_j.
\end{equation}
Let $x=x_{j+2}$ and $y=x_{j+1}$ be primitive elements such that $\kk(X_{j+2})=\kk(X_{j+1})(x)$ and $\kk(X_{j+1})=\kk(X_{j})(y)$
 (see \eqref{eq:tow}). In this case
$\deg(y)=q_0^{j}$ and $\deg(x)=q_0^{j+1}$ are the degrees of the maps $X_{j+i}\to {\mathbb P}^1,i=1,2,$ respectively.
Let $S$ be formed of all the affine points of $X_j(\kk)$ that map to $\kk^\ast$ under the map $\phi_1 \circ\dots \circ \phi_j.$
Let $n_j=q_0^{j-1}(q_0^2-1)$ be the size of $S$, i.e., number of points above $\kk^\ast$ on $X_j, j=1,2,\dots.$ 
Let $Q_{\infty,j}$ to be the point at infinity on $X_j$ and let $t=\dim(L(\ell Q_{\infty,j})),$ where $g_{j}\le \ell\le n_{j}$
and $g_{j}$ is the genus of $X_{j}.$ Finally, denote $\psi_{j+2}=\phi_{j+1}\circ \phi_{j+2}.$

Using the general construction of Sec.~\ref{sec:curves} for the tower of curves  described in \eqref{eq:gsmap}, it is possible
to obtain a family of linear H-LRC codes with two levels of hierarchy. 
\begin{proposition} \label{prop:GS} For any $j\ge 1$
there exists a family of H-LRC codes with the parameters $[n,k,d]$ and locality $(r_1,\rho_1),(r_2,\rho_2=2),$ where
\begin{gather*}
n=q_0^{j+1}(q_0^2-1)\\
k =t(q_0-1)^2\ge (\ell-g_{j}+1)(q_0-1)^2\\
d \geq n-\ell q_0^2 - 2q_0^{j+1}(q_0-2)
\end{gather*}
and $r_1=(q_0-1)^2,r_2=q_0-1,$
  \begin{gather*}
\rho_1 \geq \max(2q_0-\deg_{\psi_{j+2}}(x)(q_0-2),4).
\end{gather*}
\end{proposition}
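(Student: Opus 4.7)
The plan is to deduce Proposition \ref{prop:GS} as a direct application of Proposition \ref{prop:general} to the three-level subtower \eqref{eq:gsmap}, with most of the work going into (i) verifying that the rational points in $S$ split completely up to $X_{j+2}$, and (ii) plugging in the correct degrees of the relevant maps.

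\textbf{Step 1 (the tower and its maps).} First I would observe that in the Garcia-Stichtenoth tower the natural projection $\phi_{l}:X_l\to X_{l-1}$ is given by the Artin-Schreier equation $z_l^{q_0}+z_l=x_{l-1}^{q_0+1}$ (for $l=j+1$) or after the substitution $z_l=x_{l-1}x_{l-2}$ (for $l=j+2$). In either case the map has degree $q_0$, so both $\phi_{j+1}$ and $\phi_{j+2}$ have degree $q_0$. This fixes the hierarchical locality parameters $r_2+1=q_0$ and $s+1=q_0$, yielding $r_2=q_0-1$, $s=q_0-1$, $r_1=r_2s=(q_0-1)^2$, and $\nu=(s+1)(r_2+1)=q_0^2$.

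\textbf{Step 2 (complete splitting).} For a point $P\in X_j(\kk)$ lying above some $a\in\kk^{\ast}$ under $\phi_1\circ\cdots\circ\phi_j$, the fiber in $X_{j+1}$ is cut out by an equation of the form $z^{q_0}+z=\alpha$ with $\alpha\in\kk=\ff_{q_0^2}$; since the trace $\text{Tr}_{\ff_q/\ff_{q_0}}$ is surjective and $\alpha\in \ff_{q_0}$ (because $\alpha=x^{q_0+1}$ is a norm), the equation has $q_0$ distinct roots in $\kk$. One then checks that these preimages again lie above $\kk^{\ast}$ on $X_{j+1}$, so the argument iterates once more to $X_{j+2}$, showing that each point of $S$ has exactly $\nu=q_0^2$ preimages in $X_{j+2}(\kk)$. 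Counting the starting set of points above $\kk^{\ast}$ on $X_j$, one obtains $|S|=n_j=q_0^{j-1}(q_0^2-1)$, so the total number of evaluation points is $n=m\nu=q_0^{j-1}(q_0^2-1)\cdot q_0^2=q_0^{j+1}(q_0^2-1)$, as claimed.

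\textbf{Step 3 (apply Proposition \ref{prop:general}).} With the setup of Step 1 verified, the code $\cC(D,\{\phi_{j+1},\phi_{j+2}\})$ is by definition a $2$-level H-LRC code. The dimension formula $k=tr_2 s=t(q_0-1)^2$ is immediate from \eqref{eq:dim}, and the bound $t\ge \ell-g_j+1$ follows from Riemann-Roch applied to $Q_\infty=\ell Q_{\infty,j}$. For the distance, I would substitute $\deg(Q_\infty)=\ell$, $s+1=r_2+1=q_0$, $\deg(y)=q_0^j$, $\deg(x)=q_0^{j+1}$, $s-1=r_2-1=q_0-2$ into \eqref{eq:d} and simplify:
\begin{equation*}
d\ge n-(\ell q_0+q_0^{j}(q_0-2))q_0-q_0^{j+1}(q_0-2)=n-\ell q_0^2-2q_0^{j+1}(q_0-2).
\end{equation*}
The locality bound on $\rho_1$ follows by direct substitution of $r_2=q_0-1$ into \eqref{eq:rho1}.

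\textbf{Main obstacle.} The only nontrivial part is Step 2, establishing that every point of $S$ splits completely two levels up and that $\deg(x)=q_0^{j+1}$, $\deg(y)=q_0^{j}$ on the respective curves. These rest on the standard behavior of the GS tower above $\kk^{\ast}$, but they need to be stated carefully because the substitution $x_{l-1}=z_{l-1}/x_{l-2}$ could in principle introduce zeros or poles of the coordinate functions at intermediate levels. Once complete splitting and the two degree values are in hand, the rest of the proposition is a mechanical substitution into Proposition \ref{prop:general}. I would note that controlling $\deg_{\psi_{j+2}}(x)$ beyond the trivial bound $\deg(x)$ is left open here, motivating the power-map refinement in Section~\ref{sec:power}.
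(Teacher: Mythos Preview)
Your proposal is correct and follows the same approach as the paper: apply Proposition~\ref{prop:general} to the subtower \eqref{eq:gsmap} and substitute the relevant degrees. The paper's proof is terser---it computes $n$ by subtracting the point at infinity and the $q_0^{j+1}$ ramified points above $0$ from $|X_{j+2}(\ff_q)|$, and then simply says the remaining parameters follow from Proposition~\ref{prop:general}---whereas you spell out the complete-splitting argument via the trace/norm structure of the Artin--Schreier equation and carry out the substitution into \eqref{eq:d} explicitly; but the logical content is the same.
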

\begin{proof} Apply the construction of Proposition \ref{prop:general} to the curves in Eq.~\eqref{eq:gsmap}. The length of the obtained code equals the size of the evaluation set $D$, which is taken to be $|X(\ff_q)|-1-q_0^{j+1}$, accounting for removing the point at infinity as well as the $q_0^{j+1}$ ramified points above $0\in {\mathbb P}^1$ on $X.$ All the other parameters are found directly from Proposition \ref{prop:general}. \end{proof}


The shortcoming of the above construction is that it is unclear how to choose the primitive element $x$ such that $\deg_{\psi_{j+2}}(x)$ is small enough to guarantee a large value of the minimum distance of the middle code $\rho_1.$ 
It would be preferable if we could limit $\deg_{\psi_{j+2}}$ to 1 since this would force the middle code to be an optimal LRC code by itself. 

\subsection{H-LRC codes from power maps}\label{sec:power} To overcome the shortcomings of the previous construction, in this
 section we present a construction of H-LRC codes from the curves in the GS-tower for which the primitive element $x$ of the map constructed is naturally injective on the fibers of the map $\psi:X\to Z$ where $X=X_{j}$ is a GS curve and $Z$ is a 
 quotient curve that we are going to construct. Define the curve $X_{j,c}$ by its function field
   $$
    \kk(X_{j,c}) = \kk(x_1^c,x_2,\ldots,x_j),
    $$
where the variables $x_i$ are defined as above in \eqref{eq:gsdef}. Now let $a,b\ge 2$ be positive integers such that $(a+1)(b+1)|(q_0+1)$. 
Consider a tower of curves
  $$
  X_j\xrightarrow{\phi_2} X_{j,a+1}\xrightarrow{\phi_1} X_{j,(a+1)(b+1)}.
  $$
Applying the construction of Section \ref{sec:curves} with $x_1$ and $x_1^{a+1}$ as 
the primitive elements of $\phi_1$ and $\phi_2$ respectively, we obtain the following result, proved directly from
Proposition~\ref{prop:general}. We again rely on the notation $t=\dim(L(\ell Q_{\infty,j})),$ where $g_{j-1}\le \ell\le n_{j-1}$.
\begin{proposition}\label{prop:pow}
For any $j\ge 1$
there exists a family of H-LRC codes with parameters $[n,k,d]$ and locality $(r_1, \rho_1),(r_2,\rho_2 = 2)$, where
$n=q_0^{j-1}(q_0^2-1),k=tab$
\begin{gather*}
d\ge n-\deg(Q_{\infty})(a+1)(b+1)-q_0^{j-1}(ab+b-2)\nonumber\\
    r_2 = a,\hspace{5pt} \rho_2 = 2 \label{eq:r2}\\
    r_1 = ab,\hspace{5pt} \rho_1 = a+3.\nonumber
\end{gather*}
\end{proposition}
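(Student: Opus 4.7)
The plan is to apply Proposition \ref{prop:general} directly to the tower $X_j \xrightarrow{\phi_2} X_{j,a+1} \xrightarrow{\phi_1} X_{j,(a+1)(b+1)}$, so most of the work is in identifying the ingredients and verifying the two key facts that (a) the evaluation set is as large as claimed and (b) the primitive element $x_1$ is injective on the fibers of $\psi=\phi_1\circ\phi_2$. With the primitive elements $x=x_1$ for $\kk(X_j)/\kk(X_{j,a+1})$ and $y=x_1^{a+1}$ for $\kk(X_{j,a+1})/\kk(X_{j,(a+1)(b+1)})$, the degrees of $\phi_2$ and $\phi_1$ are $a+1$ and $b+1$, so in the notation of Prop.~\ref{prop:general} we have $r_2=a$, $s=b$, $\nu=(a+1)(b+1)$, and $r_1=r_2 s=ab$.

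For the evaluation set, I would observe that the covers $\phi_1,\phi_2$ are cyclic Kummer covers by the group of $(a+1)(b+1)$-th roots of unity acting on $x_1$; the hypothesis $(a+1)(b+1)\mid q_0+1$ guarantees these roots lie in $\kk$ (since $q_0+1\mid q-1$), and the ramification of $\psi$ is supported only on places above $x_1=0$ and $x_1=\infty$. Consequently every affine $\kk$-rational point of $X_j$ with $x_1\in\kk^\ast$ lies in a totally split fiber, and the set of such points has size $q_0^{j-1}(q_0^2-1)$, grouping into $m=q_0^{j-1}(q_0^2-1)/\nu$ complete fibers. This gives the claimed length $n=m\nu=q_0^{j-1}(q_0^2-1)$.

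Next I would verify that $\deg_\psi(x_1)=1$. Within a single fiber of $\psi$ the coordinates $x_2,\dots,x_j$ and the value of $x_1^{(a+1)(b+1)}$ are all fixed, so the $(a+1)(b+1)$ points of the fiber are distinguished precisely by their $x_1$-coordinates, which run through the $(a+1)(b+1)$ distinct $(a+1)(b+1)$-th roots of a common nonzero element of $\kk$. Plugging $\deg_\psi(x_1)=1$ into \eqref{eq:rho1} yields $\rho_1\ge\max(2(a+1)-(a-1),\,4)=a+3$, and the matching upper bound on the middle code by the LRC Singleton bound \eqref{eq:sb2} forces equality $\rho_1=a+3$, making the middle code optimal. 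The local distance $\rho_2=2$ and $\dim(\cC)=tab$ are immediate from Proposition~\ref{prop:general}.

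Finally, I would compute $\deg(x_1)=\deg(x_1^{a+1})=q_0^{j-1}$ as functions on $X_j$ and $X_{j,a+1}$ respectively (from $[\kk(X_j):\kk(x_1)]=q_0^{j-1}$ together with the multiplicativity of degree across $\kk(X_j)\supset\kk(X_{j,a+1})\supset\kk(x_1^{a+1})$), and substitute into the bound \eqref{eq:d}:
\[
d\ge n-\deg(Q_\infty)(a+1)(b+1)-q_0^{j-1}\bigl((b-1)(a+1)+(a-1)\bigr),
\]
which simplifies to the stated $n-\deg(Q_\infty)(a+1)(b+1)-q_0^{j-1}(ab+b-2)$. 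The step I expect to require the most care is the injectivity claim $\deg_\psi(x_1)=1$, since it depends crucially on having the full cyclic group of $(a+1)(b+1)$-th roots of unity available in $\kk$, a point that must be tracked through the Kummer structure of the cover.
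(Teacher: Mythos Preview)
Your proposal is correct and follows exactly the approach the paper intends: the paper states only that the result is ``proved directly from Proposition~\ref{prop:general}'' with $x_1$ and $x_1^{a+1}$ as the primitive elements, and you have carried out that verification in full, including the key point $\deg_\psi(x_1)=1$ (which the paper attributes to the Kummer structure in the preceding section) and the degree computations $\deg(x_1)=\deg(x_1^{a+1})=q_0^{j-1}$ needed for \eqref{eq:d}. Your identification of the evaluation set with the $q_0^{j-1}(q_0^2-1)$ points above $\kk^\ast$ and your use of the Singleton bound \eqref{eq:sb2} to pin down $\rho_1=a+3$ exactly are likewise in line with the paper's reasoning elsewhere.
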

Note that the middle codes in this construction are optimal LRC codes, something that was not attainable with the construction of 
Prop.~\ref{prop:GS}. Further, taking $j=1$ in this proposition, we recover codes constructed of Prop.~\ref{prop:P1}, where
$n$ is taken to be $q-1.$

\begin{example}\label{ex:H} {\rm Let $q=q_0^2$ where $q_0$ is a prime power and let $X$ be the Hermitian plane curve of genus $g_0 = q_0(q_0-1)/2$ with the affine equation:
   $$
X: x^{q_0} + x = y^{q_0+1}.
   $$
Note that this curve coincides with the curve $X_2$ from the Garcia-Stichtenoth tower. The size of the evaluation set equals 
$q_0^3-q_0$ which corresponds to removing the $q_0$ points above $0\in{\mathbb{P}}^1$ on the curve. Applying the above power map construction to the case $q_0=8$ and $a=b=3$ gives a Hermitian H-LRC code 
defined over $\ff_{64}$. We obtain a family of codes with parameters $[n=504, k=9t, d]$ H-LRC code and locality $(9,6),(3,2)$ where:
  \begin{gather*}
    d\ge n-16t-80,\;
    1\le t \le 26.
\end{gather*}
In particular, we obtain codes with the following parameters:

  \vspace*{.1in}\begin{center}\begin{tabular}{ccc}
$t$ &$k$ &$d$ \\
1 &9&$408\le d\le 494$\\
2 &18&$392\le d\le 478$\\
3 &27&$376\le d\le 462$\\
&\dots\\
11&99&$248\le d\le334$\\
12&108&$232\le d\le318$
\\
&\dots
\end{tabular}
\end{center}
where the upper bound on $d$ is found from \eqref{eq:sh}. 
}
\end{example}
\vspace*{-.2in}
\rightline{\qed}

\subsection{H-LRC codes from fiber products}
{The result of Prop. \ref{prop:pow} affords a generalization based on fiber products of curves. Let us recall the definition of the
fiber product of curves $X$ and $Y$ over a curve $Z.$ Suppose that $\phi:X\to Z$ and $\psi:Y\to Z$ are $\kk$-covering maps. 
The set $X\times_Z Y:=\{(x,y)\in X \times Y|\phi(x)=\psi(y)\}$ is called a {\em fiber product} of $X$ and $Y$. In general this set
does not always form a smooth algebraic curve, but we will assume this in our discussion below. }

Consider a tower of projective smooth absolutely irreducible curves over a finite field $\kk$
  $$
X\xrightarrow{\phi_2}Y\xrightarrow{\phi_1}Z
$$
where as before $\deg(\phi_2)=ab$ and $\deg(\phi_1)=b$. 
Let us also assume that $\kk(X) = \kk(Z)(x)$ for some primitive element $x\in\kk(X)$ that is injective on fibers of $\phi_1 \circ \phi_2$. Choose a curve $C$ that forms a $\kk$-cover of $Z$ and such that $X\times_Z C$ and $Y\times_Z C$ are both smooth and absolutely irreducible curves. Then $x$ is injective on the fibers of $$X\times_Z C\to C(\cong Z\times_Z C)$$
Applying the construction of Section~\ref{sec:curves}, we obtain the following result.
\begin{proposition} Consider codes constructed using the tower
   $$
   X\times_Z C \xrightarrow{\phi_2}Y\times_Z C \xrightarrow{\phi_1} C.
   $$ 
   The parameters of the codes are $[n,k,d],$ where $n$ is determined by the number of totally split points on $X\times_Z C$
   and the distance $d$ satisfies the same condition as in Prop.~\ref{prop:pow}.
The locality parameters are  $(r_1, \rho_1),(r_2,\rho_2 = 2)$, where
\begin{gather*}
    r_2 = a,\hspace{5pt} \rho_2 = 2\\
    r_1 = ab,\hspace{5pt} \rho_1 = a+3.
\end{gather*}
The middle code has the length $(a+1)(b+1)$ and is an optimal LRC code with respect to the bound \eqref{eq:sb2}.
\end{proposition}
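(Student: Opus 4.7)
The plan is to apply Proposition~\ref{prop:general} directly to the base-changed tower $X\times_Z C \xrightarrow{\phi_2} Y\times_Z C \xrightarrow{\phi_1} C$, inheriting the structural parameters from Proposition~\ref{prop:pow}. First I would check that the hypotheses of Proposition~\ref{prop:general} hold for the new tower: by assumption both fiber products are smooth projective absolutely irreducible curves over $\kk$, and since base change along $C\to Z$ preserves degree, the maps $\phi_2$ and $\phi_1$ in the new tower have degrees $ab$ and $b$ respectively, matching the setup with $r_2+1=a+1$ and $s+1=b+1$.

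The critical step is to establish that the primitive element $x$, pulled back from $X$ to $X\times_Z C$ via the first projection, remains injective on the fibers of the composition $\widetilde{\psi}:=\phi_1\circ\phi_2\colon X\times_Z C\to C$, so that $\deg_{\widetilde\psi}(x)=1$. Writing $\pi\colon C\to Z$ for the chosen cover and $\psi$ for the original composition $X\to Z$, the universal property of the fiber product identifies $\widetilde\psi^{-1}(c)$ with $\psi^{-1}(\pi(c))\subset X$ via the first projection, and under this identification the pulled-back function $x$ agrees with the original $x$. The injectivity hypothesis on $\psi$-fibers in $X$ therefore transfers verbatim to $\widetilde\psi$-fibers in $X\times_Z C$.

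Substituting $r_2=a$, $s=b$ and $\deg_{\widetilde\psi}(x)=1$ into the bound \eqref{eq:rho1} gives
$$\rho_1 \geq 2(a+1) - (a-1) = a+3,$$
and the matching upper bound $\rho_1\le a+3$ comes from the Singleton-type bound \eqref{eq:sb2} applied to the middle code, which has length $(a+1)(b+1)$, dimension $r_1=ab$, and locality $r_2=a$. Hence the middle codes are optimal LRC codes in the sense of \eqref{eq:sb2} with $\rho_1=a+3$. The length $n$ equals the size of the evaluation set formed by the chosen totally split fibers on $X\times_Z C$, the dimension $k=tab$ follows from counting the basis \eqref{eq:V}, and the distance estimate is the direct analogue of \eqref{eq:d} with the degrees of the new maps and of $Q_\infty$ substituted in, which reproduces the distance bound of Proposition~\ref{prop:pow}.

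The main obstacle I foresee is the injectivity argument in the second step: if $\pi\colon C\to Z$ has ramification overlapping with the branch locus of $\psi$, the naive set-theoretic fiber-product description may need to be replaced by a scheme-theoretic one. The assumed smoothness of $X\times_Z C$ and $Y\times_Z C$ rules out degenerate behaviour and lets the universal-property identification proceed cleanly. A secondary, purely bookkeeping detail is choosing a positive divisor $Q_\infty$ on $C$ with support disjoint from the evaluation set, which is handled in the standard way by pulling back an appropriate divisor from $Z$.
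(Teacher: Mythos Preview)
Your proposal is correct and follows the same approach as the paper, which in fact offers no explicit proof beyond the remark that the result comes from ``applying the construction of Section~\ref{sec:curves}''; your fiber-identification argument via the universal property is precisely what justifies the paper's one-line claim that ``$x$ is injective on the fibers of $X\times_Z C\to C$.'' One small slip to fix: you state that $\phi_2$ and $\phi_1$ have degrees $ab$ and $b$ and then identify $r_2+1=a+1$, $s+1=b+1$, which is inconsistent---the degrees should be $a+1$ and $b+1$ (the paper's text contains the same typo, but your subsequent computations use the correct values).
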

This construction specializes to Proposition \ref{prop:pow} with the choices $X, Y$ and $Z$ such that $\kk(X)=\kk(x_1)$, 
$\kk(Y)=\kk(x_1^{a+1})$ and $\kk(Z)=\kk(x_1^{(a+1)(b+1)}),$ and $C=X_{j,(a+1)(b+1)}, j\ge 1.$ 

Fiber products of Artin-Schreier curves, developed in \cite{vdGvdV95}, look especially promising for constructing H-LRC codes because they give curves with many points, including many maximal curves.

\subsection{H-LRC Codes with availability}
In this section we consider a generalization of codes with locality wherein local correction of erasures can be performed by
accessing several disjoint groups of codeword's coordinates. In the literature on LRC codes (without hierarchical structure) this
generalization is called the {\em availability problem} \cite{RPDV16}, \cite{tamobarg}, \cite{BTV17}, \cite{HMM}. We begin with the definitions and general expressions
of the parameters of the codes, and then give two examples, which form the main contents of this section.

Let us first define an LRC code with the availability property (and no hierarchy of recovering sets). The following definition is a slight
extension of the definition in \cite{RPDV16}.

\blue{\begin{definition} A linear code $\cC$ is LRC with locality $(r_j,\rho_j)_{1\le j\le\tau}$ and availability $\tau$ if
 for every $i\in[n]$ there are $\tau$ punctured codes $\cC_{i,1},\dots,\cC_{i,\tau}$ such that for every $j\in[\tau],$
   \begin{enumerate}
   \item $i\in\supp(\cC_{i,j}),$
   \item $\dim(\cC_{i,j})\le r_{j}$,
    \item $d(\cC_{i,j})\ge \rho_j$
     \item The set $\big|\text{\rm supp}(C_{i,j})\backslash\bigcup_{\begin{substack}{k\in[\tau]\\k\ne j}\end{substack}}\text{\rm supp}(C_{i,k})\big|$ contains $\dim(\cC_{i,j})$ linearly independent coordinates of the code $\cC_{i,j}.$
\end{enumerate}
\end{definition}
It may seem unnecessary to allow different parameters of the codes $\cC_{i,j},$ but the examples that we construct below are of this form. Moreover, we found it difficult to construct examples of codes from curves which do not use this generalization. The number of erasures
that can be corrected in parallel by the codes $C_{i,j}, j\in[\tau]$ equals $\min_{j}\rho_j -1.$
}

\vspace{.1in} Let us define H-LRC codes with availability. They generalize both LRC codes with locality $(r,\rho)$ and LRC codes with availability from the
cited works.

\blue{\begin{definition}[H-LRC codes with availability] \label{def:t-hie} Let $\tau_1,\tau_2\ge 1$ and let $\rho_{2,{j_2}}< \rho_{1,{j_1}}$ and $r_{2,{j_2}}\le r_{1,{j_1}}$ for $j_1\in[\tau_1], j_2\in[\tau_2].$ A linear code $\cC$ is H-LRC and 
parameters $((r_{1,j_1},\rho_{1,j_1}),(r_{2,j_2},\rho_{2,j_2}))$ and availability $\tau_1,\tau_2$ if 
\begin{enumerate}
\item it has locality $(r_{1,{j_1}},\rho_{1,{j_1}}),j_1\in[\tau_1]$ and availability $\tau_1$,
\item each of the codes $C_{i,j_1},i\in[n],j_1\in[\tau_1]$ is an LRC code with locality $(r_{2,j_2},\rho_{2,j_2}),j_2\in[\tau_2]$ and availability $\tau_2.$
\end{enumerate}
\end{definition}
}


{\em Remark} 
This definition can be specialized to the case when availability is required only for local recovery at the level of the entire code
$\cC$ (in this case $\tau_2=1$), or only at the level of the middle codes (in this case $\tau_1=1$).

%
%

\vspace*{.1in}
To generate H-LRC codes with availability we use a construction inspired by the LRC codes with availability introduced in \cite{BTV17} and developed in \cite{HMM}.

\vspace*{.1in}
 \begin{example}\label{example:FPHerm} {\rm \blue{Hermitian function fields, already mentioned above, provide an easy example of the fiber product construction of LRC codes with availability (see \cite{BTV17}, Sec.V.A, V.B). Let $\kk=\ff_q, q=q_0^2,$ and let $X,Y,Z$ be isomorphic to $\mathbb{P}^1_{\kk}.$ Let $X=\kk(x,y),$ where $x^{q_0}+x=y^{q_0+1}$, let $Y_1=\kk(x),Y_2=\kk(y),$
and let $Z=\kk(u),$ where $u=x^{q_0}+x, u=y^{q_0+1}$ as shown in the following diagram:}

\begin{center}\begin{tikzcd}
&X \arrow{rd}\arrow{ld}\arrow{dd}\\
Y_1 \arrow{rd}&&Y_2 \arrow{ld}\\
&Z
\end{tikzcd}.
\end{center}

 \blue{Then $X=Y_1\times_Z Y_2,$ and the function $y$ is constant on the fibers of the map $X\to Y_1$, while $x$ is constant on the fibers of the map
$X\to Y_2.$ This supports the univariate interpolation that underlies the local erasure recovery in LRC codes with availability.

}
}
\end{example}

 We will focus on the example where the availability on both levels of hierarchy is $\tau_1=\tau_2=2$ (even though 
it is possible to make it more general, already availability 2 results in cumbersome calculations, see the examples below).
Consider the diagram of curves given below where we assume that all the arrows correspond to separable maps between projective curves over a fixed finite field $\kk$.

\vspace*{.1in} Suppose that $Z$ is an absolutely irreducible smooth curve.
The curve $Y$ is constructed as the fiber product of two curves over $Z$ and the curve $X$ is constructed as the fiber 
product of two curves over $Y$. \blue{Suppose that there are $c$ points on $Z$ such that 
\begin{itemize}
\item[{(i)}] there are $t_1$ points of $Y_1$ above each of these points of $Z$;
\item[{(ii)}] there are $t_2$ points of $Y_2$ over each of these points of $Z$.
\end{itemize}
Now consider the points on $Y$ obtained as pairs of the points on $Y_1,Y_2$ described in (i)-(ii). Suppose that for each of these points
\begin{itemize}
\item[(iii)] there are $s_1$ points of $X_1$ above each of these points of $Y$;
\item[(iv)] there are $s_2$ points of $X_2$ above each of these points of $Y,$
\end{itemize}
where $\text{gcd}(s_1,s_2)=1$ and $\text{gcd}(t_1,t_2)=1.$
}

\begin{figure}[ht]
\begin{center}
\begin{tikzpicture}
  \matrix (m) [matrix of math nodes,row sep=3em,column sep=4em,minimum width=2em]
  {
     & X=X_1 \times_Y X_2 & \\
     X_1& &X_2 \\
     & Y= Y_1 \times_Z Y_2 &\\
     Y_1 & & Y_2\\
     & Z \\};
  \path[-stealth]
    (m-1-2) edge node [above left] {$\psi_{X_1}$} (m-2-1)
            edge node [above right] {$\psi_{X_2}$} (m-2-3)
    (m-2-1) edge node [below left] {$s_1$} (m-3-2)
    (m-2-3) edge node [below right] {$s_2$} (m-3-2)
    (m-3-2) edge node [above left] {$\psi_{Y_1}$} (m-4-1)
            edge node [above right] {$\psi_{Y_2}$} (m-4-3)
    (m-4-1) edge node [below left] {$t_1$} (m-5-2)
    (m-4-3) edge node [below right] {$t_2$} (m-5-2);
\end{tikzpicture}
\end{center}
\caption{Diagram of covering maps for the construction of codes with availability}\label{fig:avail}
\end{figure}
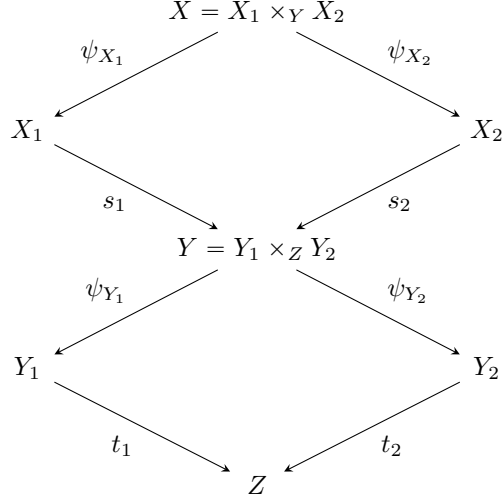

Let $D=\{(Q_1,Q_2)\}\subset X,$ where $Q_1$ runs over the points of $X_1$ constructed in (iii) and $Q_2$ over the points of $X_2$ constructed in (iv). Note that the size of the set $D$
is $n=cs_1s_2t_1t_2.$ Choose a positive divisor $Q_{\infty}$ on $Z$ with $L(Q_{\infty})=
 \text{span}\,\{f_1,\ldots,f_m\}$ and choose primitive elements $x_1, x_2, y_1,$ and $y_2$ such that 
 $\kk(X_i)=\kk(Y)(x_i)$ and $\kk(Y_i)=\kk(Z)(y_i),$ $i=1,2.$ 
\blue{ Assume that the degrees of $x_1,x_2$ considered as maps from $X$ to 
$\mathbb{P}^1$ are $h_{x_1}:=\deg(x_1), h_{x_2}:=\deg(x_2)$ and the degrees of $y_1,y_2$ as maps from $Y$ to $\mathbb{P}^1$ are $h_{y_1},h_{y_2}.$ 
Further, let   $h'_{i,j}$ be the maximum possible number of zeros of $x_i$ on a fiber of the map $X\to Y_j , i, j =1, 2.$ (this is similar to $
\deg_{\psi}(x)$ defined before Proposition~\ref{prop:general}). Therefore, $h'_{ii}=s_i, i=1, 2$ and, if the fibers of these maps are transversal, then also $h'_{12}=h'_{21}=1$.}
 Let $V$ be the space of functions given by
   \begin{align*}
   V=\operatorname{span}\{f_ix_1^{j_1}x_2^{j_2}y_1^{k_1}y_2^{k_2}\,|\,i=1,\ldots,m;
     j_l=0,\dots,s_l-2, k_l=0,\dots,t_l-2;l=1,2\}
     \end{align*}
Define the code $\mathcal{C}$ as the image of the evaluation map
\begin{align*}
    \operatorname{ev}\!:\,&V\to \kk^n\\
    &v\mapsto(v(P_i)|P_i\in D).
\end{align*}
\blue{The code $\cC$ is supported on all the $n$ points in $D.$ 
Each of the middle codes $\cC_{i,1}$ ($\cC_{i,2}$) is supported on the fibers of the map $X\to Y_1$ (resp., $X\to Y_2$). 
The length of the codes is $\nu_1=s_1s_2t_2$ and $\nu_2=s_1s_2t_1,$ respectively. The bases of function spaces that give the codes
$\cC_{i,j},j=1,2$ are
  \begin{gather*}
  V_1=\{x_1^{j_1}x_2^{j_2}y_1^{k_1}\mid j_l=0,\dots,s_l-2,l=1,2; k_1=0,\dots,t_1-2\}\\
  V_2=\{x_1^{j_1}x_2^{j_2}y_2^{k_2}\mid j_l=0,\dots,s_l-2,l=1,2; k_2=0,\dots,t_2-2\}.
  \end{gather*}
  These codes are LRC with repair groups of size $s_2$ for $j=1$ and $s_2$ for $j=2.$ Local correction of a single erasure 
can be performed in parallel along the corresponding fibers of the maps
$X\to X_j,j=1,2$ (cf. Fig.~\ref{fig:cover}).}

The properties of the code $\cC$ are collected in the following proposition.
\blue{
\begin{proposition}\label{prop:HLRC} The code $\mathcal{C}$ is an $[n,k,d]$ H-LRC code with parameters
\begin{gather*}
    n=cs_1s_2t_1t_2\\
    k=m(s_1-1)(s_2-1)(t_2-1)(t_2-1)\\
    d\ge n-\deg (Q_\infty)s_1s_2t_1t_2 - (h_{y_1}(t_1-2)+h_{y_2}(t_2-2))s_1s_2-h_{x_1}(s_1-2)-h_{x_2}(s_2-2),
\end{gather*}
availability $2$, and locality $(r_{11},\rho_{11}),(r_{12},\rho_{12}),$ where
\begin{gather*}
       r_{11} = (s_1-1)(s_2-1)(t_1-1) \nonumber\\
      \rho_{11}\ge \max(s_1s_2(t_2-t_1+2)-h'_{11}(s_1-2)-h'_{21}(s_2-2),4)\\
      r_{12}=(s_1-1)(s_2-1)(t_2-1)\\
      \rho_{12} \ge \max(s_1s_2(t_1-t_2+2)-h'_{12}(s_1-2)-h'_{22}(s_2-2),4).
\end{gather*}
The middle codes $\cC_{i,1},\cC_{i,2}$ are H-LRC(2) codes 
with locality parameters $(s_2-1,2)$ and $(s_1-1,2),$ respectively.
\end{proposition}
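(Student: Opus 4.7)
The plan is to follow the template of Proposition \ref{prop:general}, adapted to the fiber-product diagram. The length $n = c s_1 s_2 t_1 t_2$ is immediate by iterating hypotheses (i)--(iv): the fiber product $Y = Y_1 \times_Z Y_2$ has $t_1 t_2$ points above each of the $c$ chosen points of $Z$, and $X = X_1 \times_Y X_2$ then has $s_1 s_2$ points above each such point of $Y$. The dimension claim reduces to injectivity of the evaluation map on $V$, which follows once the displayed spanning set is shown to be linearly independent over $\kk$ (the exponents of $x_1, x_2, y_1, y_2$ lie strictly below the corresponding extension degrees, and the $f_i$ are independent in $L(Q_\infty)$) and the maximum pole order computed next is strictly smaller than $n$.

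For the distance I would bound the pole divisor on $X$ of a monomial $f_i x_1^{j_1} x_2^{j_2} y_1^{k_1} y_2^{k_2}$ factor by factor: the pullback of $f_i$ from $Z$ contributes at most $\deg(Q_\infty) s_1 s_2 t_1 t_2$, each $y_l^{k_l}$ pulled back from $Y$ contributes $h_{y_l}(t_l-2) s_1 s_2$, and each $x_l^{j_l}$ viewed directly on $X$ contributes $h_{x_l}(s_l-2)$. Summing and subtracting from $n$ yields the claimed inequality on $d$.

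For the middle-code structure, note that $D$ admits two simultaneous partitions, into fibers of $X \to Y_1$ (each of size $\nu_1 = s_1 s_2 t_2$) and into fibers of $X \to Y_2$ (each of size $\nu_2 = s_1 s_2 t_1$), providing availability $2$ at the middle level. Restricted to a fiber of $X \to Y_1$ the functions $f_i$ and $y_1$ are constant, so $V$ collapses to polynomials in $x_1, x_2, y_2$ of the prescribed degrees, of dimension $(s_1-1)(s_2-1)(t_2-1)$; the bound on $\rho_{11}$ then follows exactly as in \eqref{eq:rho1}, with the $h'_{ij}$ controlling the pole-degree count on fibers of $\psi_{Y_1}$ and with the trivial lower bound $4$ obtained by the same three-erasure argument as in Proposition \ref{prop:general}. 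Each middle code is itself a fiber-product LRC code for $X \to Y \to Y_1$ with availability $2$: on a fiber of $X \to X_1$ only $x_2$ varies, and a polynomial of degree $\le s_2-2$ in $x_2$ is recovered from $s_2-1$ of its values; symmetrically for $X \to X_2$. This yields localities $(s_2-1,2)$ and $(s_1-1,2)$ as claimed, and $\cC_{i,2}$ is handled identically.

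The main obstacle will be bookkeeping the four fiber-multiplicities $h'_{ij}$: these equal $s_i$ when $i=j$, but equal $1$ when $i \ne j$ only if the fibers of $X \to X_1$ and $X \to X_2$ are transversal over $Y$. Verifying this transversality (and the analogous statement for the partitions of $Y$ over $Z$) is exactly where the coprimality hypotheses $\gcd(s_1, s_2) = \gcd(t_1, t_2) = 1$ enter. A secondary but routine point is to check condition (4) in the availability definition, i.e., that for each coordinate the supports of $\cC_{i,1}$ and $\cC_{i,2}$ meet in the minimum possible way dictated by the transversal partitions.
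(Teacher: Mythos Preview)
Your approach is essentially the same as the paper's: count points for $n$, count monomials in $V$ (and its restrictions $V_1,V_2$) for the various dimensions, and bound the maximum number of zeros of a generic basis element for $d$ and $\rho_{1j}$, invoking the three-erasure argument of Proposition~\ref{prop:general} for the fallback lower bounds $\rho_{1j}\ge 4$. The paper's proof is in fact much terser than yours and omits the discussion of transversality (the role of $\gcd(s_1,s_2)=\gcd(t_1,t_2)=1$) and of condition (4) in the availability definition that you flag at the end.

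One caveat worth noting: you correctly observe that on a fiber of $X\to Y_1$ it is $y_1$ (being pulled back from $\kk(Y_1)$) that is constant and $y_2$ that varies, so the restricted space has dimension $(s_1-1)(s_2-1)(t_2-1)$. This is the value the displayed statement assigns to $r_{12}$, not to $r_{11}$. The paper's labeling of $V_1,V_2$ (and hence of $r_{1j},\rho_{1j}$) appears to carry an index swap---consistent with the visible typo $(t_2-1)(t_2-1)$ in the formula for $k$, which should read $(t_1-1)(t_2-1)$. Your computation is mathematically the right one, but be aware it produces the two middle-code dimensions and distance bounds interchanged relative to how the proposition is stated.
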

\begin{IEEEproof}
The parameters of the codes follow directly from the construction. Specifically, the code length is obtained from the count of points and
the dimensions are found by counting the size of the corresponding functional bases in $V,V_1,V_2.$ The estimates of the distances of the codes $d,\rho_{11},\rho_{12}$ are found from the bounds on the largest possible number of zeros of the functions in the bases. For instance, 
  $$
  \rho_{11}\ge s_1s_2t_2-(s_1s_2(t_1-2)+h'_{11}(s_1-2)+h'_{21}(s_2-2)),
  $$
where the terms in the parenteses bound above the maximum number of zeros of a function in $V_1.$
  
   The lower bounds $\rho_{1,\ast}\ge 4$ in the estimates 
of $\rho_{11},\rho_{12}$ is justified exactly as in Prop.~\ref{prop:general}.
\end{IEEEproof}}
We note that for the bounds on $\rho_{11}$ and $\rho_{12}$ to be simultaneously greater than 4, is is necessary that $|t_1-t_2|\le 1.$ This is
somewhat restrictive, but still possible to account for in examples, see the next section.

\subsection{Families of H-LRC codes with availability}\label{sec:H-LRC-A}
While the description in the previous section relies on the bottom-up construction that starts with the curve $Z$, examples are easier to obtain
using a top-down approach.

\subsubsection{Construction from RS codes} \blue{In this section we construct a family of H-LRC codes, extending the construction of 
Sec.~\ref{sec:RS} above. Let $\ff_q$ be a finite field with generating element $g,$ and suppose that $q-1=cs_1s_2t_1t_2,$ 
where $\text{gcd}(s_1,s_2)=1,\text{gcd}(t_1,t_2)=1$. We will take $D=\ff_q^\ast$ and construct codes of length $n=q-1.$
Consider the following subgroups of the cyclic group $\ff_q^\ast:$
   $$
   \begin{aligned}
   G_1=\langle g^{t_1l}\rangle,\quad H_1=\langle g^{s_2t_1 t_2 l}\rangle\\
   G_2=\langle g^{t_2 l}\rangle,\quad H_2=\langle g^{s_1t_1 t_2 l}\rangle.
   \end{aligned}
  $$
The groups $H_1$ and $H_2$ define a pair of mutually orthogonal partitions of the set $[n]$ (meaning that the blocks of the partitions
intersect on at most one point). 

Referring to Fig.~\ref{fig:avail}, we obtain a diagram of curves (defined by their function fields) as follows:

\begin{center}\begin{tikzpicture}
\draw(3,1.5) node[] {$X\!:\; \kk(x)$};
\draw(1,0.3) node[] {$\begin{array}{c}X_1\!:\; \kk(x_1)\\ x_1=x^{s_2}\end{array}$};
\draw(5,0.3) node[] {$\begin{array}{c}X_2\!:\; \kk(x_2)\\ x_2=x^{s_1}\end{array}$};
\draw(3,-0.9) node[] {$Y\!:\; \kk(x_1)\cap \kk(x_2)=\kk(x^{s_1s_2})$};
\draw(1,-2.3) node[] {\begin{tabular}{l@{\hspace*{0.1in}}l}$Y_1\!:$ &$\kk(x_1^{t_2})\cap\kk(x_2)$\\
                     &$=\kk(x^{s_1s_2t_2})$\end{tabular}};
\draw(5,-2.3) node[] {\begin{tabular}{l@{\hspace*{0.1in}}l}$Y_2\!:$ &$\kk(x_1)\cap\kk(x_2^{t_2})$\\
                     &$=\kk(x^{s_1s_2t_1})$\end{tabular}};
\draw(3,-3.8) node[] {$Z\!:\; \kk(x_1^{t_2})\cap\kk(x_2^{t_1})=\kk(x^{s_1s_2t_1t_2})$};

\draw[thin,->](2.5,1.3) -- (1.5,0.7) node[pos=0.4, anchor=north] {};
\draw[thin,->](3.5,1.3) -- (4.5,0.8) node[pos=0.4, anchor=north] {};

\draw[thin,->](4.5,-.1) -- (3.5,-0.6) node[pos=0.4, anchor=north] {};
\draw[thin,->](1.5,-.1) -- (2.5,-0.6) node[pos=0.4, anchor=north] {};

\draw[thin,->](3.5,-1.2) -- (4.5,-1.8) node[pos=0.4, anchor=north] {};
\draw[thin,->](2.5,-1.2) -- (1.3,-1.8) node[pos=0.4, anchor=north] {};

\draw[thin,->](1.5,-2.8) -- (2.5,-3.4) node[pos=0.4, anchor=north] {};
\draw[thin,->](4.5,-2.8) -- (3.5,-3.4) node[pos=0.4, anchor=north] {};

\end{tikzpicture}
\end{center}

The covering maps $\Psi$ are defined in an obvious way, and are of degrees $\deg(\Psi_{X_l})=s_l$ and $\deg(\Psi_{Y_l})=t_l,l=1,2.$


The codewords are obtained by evaluating functions of the following form:
     $$
     v=\sum_{\bar i} v_{\bar i} f_i x_1^{j_1}x_2^{j_2}y_1^{k_1}y_2^{k_2},
     $$
     where $\bar i=(i,j_1,j_2,k_1,k_2), y_1=x_2^{s_2t_2},y_1=x_1^{s_1t_1}$, and the summation runs over the range
\begin{gather*}
1\le i\le m\\
 0\le j_l\le s_l-2,
 0\le k_l\le t_l-2, \; l=1,2.
\end{gather*}
The value of $m=\deg(Q_\infty)+1$ is a parameter of the construction, and it is chosen so that the estimates of the other parameters do not trivialize. Once
it is fixed, the values of $k,d$ and locality are obtained directly from Proposition \ref{prop:HLRC}, where we have 
$h_{y_1}=t_2,h_{y_2}=t_1,h_{x_1}=s_2t_1t_2,h_{x_2}=s_1t_1t_2,$ and $h'_{11}=s_1,h'_{22}=s_2,h'_{12}=h'_{21}=1.$ In particular,
   \begin{gather}
   d\ge (c-m-3)s_1s_2t_1t_2+2(s_1s_2(t_1+t_2)+t_1t_2(s_1+s_2))\nonumber\\
   \rho_{11}\ge s_1s_2(t_2-t_1+2)-s_1(s_1-2)-(s_2-2) \label{eq:rho11}\\
   \rho_{12}\ge s_1s_2(t_1-t_2+2)-(s_1-2)-s_2(s_2-2) \label{eq:rho12}
   \end{gather}
     To give a numerical example, let $q=41^2,$ then we can take $s_1=7,s_2=3,t_1=4,t_2=5,c=4.$ Taking $m=1,$ we obtain $k=144,d\ge 778$
and for the middle codes the parameters $[\nu_1=105,r_{11}=36,\rho_{11}\ge 26]$ and $[\nu_2=84,r_{12}=48,\rho_{12}\ge 13].$

In conclusion we note that a one-level version of this construction (LRC codes with two disjoint recovering sets, but with no
hierarchical structure) was given in \cite[Sec. IV]{tamobarg}.

}

\vspace*{.1in}
\subsubsection{Construction from Hermitian curves} \blue{Let $X$ be a Hermitian curve over $\kk=\ff_q, q=q_0^2,$ given by the affine equation
   $$
   X\!:\; x^{q_0}-x=y^{q_0+1}.
   $$
As in Example~\ref{example:FPHerm}, we view $X$ as a fiber product. To implement the construction in Fig.~\ref{fig:avail}, we need the following data. Consider a natural projection $X(\ff_q)\to {\mathbb P}^1$ given by
$(x,y)\mapsto x.$ Let $M:=\{x\in\ff_{q}: x^{q_0}+x=0\}$ and take the point set $D=\ff_q\backslash M.$ The set $D$ will be the
evaluation set of points of the constructed code, and thus, $n=|D|=q_0^3-q_0.$ 
Let $q_0=s_2^2,$ where $s_2=p^3, t_2=p^2,p=\text{char}\, \kk$ and let $q_0+1=s_1t_1 c'.$ Let $n=cs_1s_2t_1t_2,$ where
  $$
  c=\frac{n}{s_1s_2 t_1t_2}=c'\frac{q_0(q_0^2-1)}{(q_0+1)s_2t_2}=c'p(q_0-1).
  $$
Overall with this choise of the parameters, we obtain $n=q_0(q_0^2-1).$

As above, assume that $\operatorname{gcd}(s_1,s_2)=
\operatorname{gcd}(t_1,t_2)=1.$ Define
   $$\begin{array}{c}\begin{matrix}
   v_1=x^{s_2}-x, & v_2=y^{s_1}\\
   u_1=v_1^p+v_1, & u_2=v_2^{t_1}.
   \end{matrix}
   \end{array}
   $$
Let    
   $$
  \begin{aligned}
  X_1\!:\; &v_1^{s_2}+v_1=y^{q_0+1}  \\
      &\kk(X_1)=\kk(v_1,y)\\[.03in]
  X_2\!:\; &x^{q_0}-x=v_2^{t_1c'}\\
      &\kk(X_2)=\kk(x,v_2).
 \end{aligned}
   $$ 
We have $X=X_1 \times_Y X_2$ and $\kk(X)=\kk(x,y)=\kk(X_1)(x)$ and $\kk(X)=\kk(X_2)(y),$ where the curve $Y$ with the function
field $\kk(Y)=\kk(v_1,v_2)$ is defined as
  $$
  Y: v_1^{s_2}+v_1=v_2^{t_1c'}
  $$
This curve is a fiber product of the curves $Y_1,Y_2$ over $Z$, where 
  $$
  \begin{aligned}  
     &Y_1\!:\; &&u_1^{p^2}-u_1^p+u_1=v_2^{t_1c'}\\
        &&&\kk(Y_1)=\kk(u_1,v_2)\\[.03in]
     &Y_2\!:\; && v_1^{s_2}+v_1=u_2^{c'}\\
        &&&\kk(Y_2)=\kk(u_2,v_1)\\
     &Z\!:\; &&u_1^{p^2}-u_1^p+u_1=u_2^{c'}\\
             &&&\kk(Z)=\kk(u_1,u_2).
    \end{aligned}
    $$    
This completes the desired commutative diagram. 
The primitive elements of the field extensions are $(x_1,x_2,y_1,y_2)=(y,x,v_2,v_1)$ and their degrees are given by
 \begin{gather*}
  h_{x_1}=q_0+1,\; h_{x_2}=q_0,\;h_{y_1}=t_2,\; h_{y_2}=t_1\\
  h_{11}'=s_1,\;h_{22}'=s_2,\;h_{12}'=h_{21}'=1.
  \end{gather*}

The expressions for the code parameters are obtained directly from Propostion~\ref{prop:HLRC}, including \eqref{eq:rho11}-\eqref{eq:rho12}. 
To give an example, let $q_0=64, p=2, c'=1,s_1=8,s_2=13,t_1=5,t_2=4.$ The parameters of the code $\cC$ depend on the choice of $\deg(Q_\infty)$
and can vary. The parameters of the middle codes are $\nu_1=416,r_{11}=336,\nu_2=520,r_{12}=252.$ From \eqref{eq:rho12} we obtain $\rho_{12}\ge 163,$ and for $\rho_{11}$ we can only claim the lower bound of $4$ (the true distance is likely higher).

Observe that this construction affords many versions, and we give one of the simplest possible of them.

\vspace*{.1in}In conclusion we note that it is possible to extend this example to the tower of Garcia-Stichtenoth curves, obtaining asymptotically good sequences of H-LRC codes with availability. 
}

%
%

\remove{\subsection{Example: an H-LRC code with availability $\tau_1=\tau_2=2$ }
Let $\kk=\ff_q$ be a finite field and let $s, m$ be such that $(s+1)m=q-1$. Let $Z=\mathbb{P}_{\kk}^1$ with function field $\kk(z)$ and let $Y_1,\ldots,Y_4$ be curves over $\kk$ with function fields $\kk(z,y_1),\ldots, \kk(z,y_4)$ respectively, where
\begin{align*}
    y_i^{s+1}=z, \quad i=1,\ldots,4.
\end{align*}

Let $Y=Y_1\times_Z Y_2$ and let $X_1=Y\times_Z Y_3$, $X_2=Y\times_Z Y_4$, and $X=X_1\times_Y X_2$. These objects fit the diagram above with all the arrows being the natural separable projections. We apply the above construction to this set of curves with $Q_\infty=t\infty_Z$ and $D$ equal to the $m(s+1)^4$ points in $X(\kk)$ above the $m$ values in $\kk$ that are $(s+1)$st powers. This gives the following result.

\begin{proposition} There exist H-LRC codes over $\kk=\ff_q$ with availability $\tau_1=\tau_2=2$ and code parameters
\begin{align*}
    n&=m(s+1)^4\\
    k&=t(s-1)^4\\
    d&\ge n-t(s+1)^4-4(s-1)(s+1)^2
\end{align*}
and locality parameters $(s^3,8),(s,2)$.
\end{proposition}
All the parameters in this proposition are found directly from Proposition \ref{prop:HLRC}. We note that the bound on the distance $\rho_1$ is 8 because $h_x'$ is $(s+1)^2$, and the first term under the maximum in  \eqref{eq:Hrho1} trivializes.
}

\section{Asymptotic parameters}
In this section we consider asymptotic parameters of H-LRC codes. In the setting that we adopt, the code length $n\to\infty,$ and we
call the codes asymptotically good if the limits of the rate $R:=(1/n)\log_q|\cC|$ and relative distance $\delta:=d/n$ both are bounded away from 0 as $n\to\infty.$ The parameters of the middle code $[\nu,r_1,\rho_1]$ are constant and do not depend on $n$.

\subsection{Asymptotically good families of H-LRC codes}
Let us compute the asymptotics of the code parameters in Prop.~\ref{prop:GS}. Recall that $g_j\le\frac{n_j}{q_0-1}$ \cite{garcia95}.
We have
  \begin{align}
  \frac{d}{n}+\frac{k}{n}\frac{q_0^2}{(q_0-1)^2}&= 1-\frac{2(q_0-2)}{q_0^2-1}-\frac{q_0^2g_j}{n}+\frac{q_0^2}{n}\nonumber\\
  &\ge 1-\frac{3}{q_0+1}+\frac{q_0^2}{n}.\label{eq:SG}
  \end{align}
  
  We obtain the following code family.  
  \begin{proposition}\label{prop:asymp} Let $q=q_0^2.$  
There exists a family of linear $q$-ary 2-level H-LRC codes with locality $(((q_0-1)^2,\rho_1),(q_0-1,2)),$ where $\rho_1$ satisfies the
bound of Proposition \ref{prop:GS}, and such that the rate and relative distance satisfy the inequality
  \begin{equation}\label{eq:ab}
  R\ge \Big(\frac{q_0-1}{q_0}\Big)^2\Big(1-\delta-\frac{3}{q_0+1}\Big).
  \end{equation}
\end{proposition}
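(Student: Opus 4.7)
The plan is to apply Proposition \ref{prop:GS} directly and then eliminate the free parameter $\ell$ (the degree of $Q_{\infty,j}$) that appears in both the dimension and distance bounds. This is essentially the standard asymptotic trick used for algebraic-geometric codes (e.g.\ in the derivation of the TVZ bound), adapted to the H-LRC setting where the rate is scaled by a locality factor $(q_0-1)^2/q_0^2$.

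First, I would fix $q_0$ and a large $j$, and let $j\to\infty$ so that the code length $n=q_0^{j+1}(q_0^2-1)$ tends to infinity while the middle code parameters remain constant, namely $\nu=q_0^2$, $r_1=(q_0-1)^2$, $r_2=q_0-1$ and $\rho_2=2$. Then from Proposition \ref{prop:GS} I would take $\ell$ in the range $g_j\le\ell\le n_j$ and use the Riemann–Roch inequality $t:=\dim L(\ell Q_{\infty,j})\ge \ell-g_j+1$ to write
\begin{equation*}
k\ge(\ell-g_j+1)(q_0-1)^2,\qquad d\ge n-\ell q_0^2-2q_0^{j+1}(q_0-2).
\end{equation*}

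Next, to eliminate $\ell$, I would form the linear combination $\delta+R\cdot q_0^2/(q_0-1)^2$, in which the $\ell$-terms cancel. Dividing by $n$ and using the identity $2q_0^{j+1}(q_0-2)/n=2(q_0-2)/(q_0^2-1)$ yields a bound depending only on $g_j/n$. At this point I would invoke the Garcia–Stichtenoth bound $g_j\le n_j/(q_0-1)$ from \cite{garcia95}, and combine it with $n=q_0^2\,n_j$ to conclude $g_jq_0^2/n\le 1/(q_0-1)$. The arithmetic identity
\begin{equation*}
\frac{2(q_0-2)}{q_0^2-1}+\frac{1}{q_0-1}=\frac{3}{q_0+1}
\end{equation*}
then produces inequality \eqref{eq:SG} exactly, and rearrangement gives the claimed asymptotic bound \eqref{eq:ab}.

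There is no real obstacle in this argument since the calculations are routine and the required ingredients (Riemann–Roch, the GS genus bound, and Proposition \ref{prop:GS}) are all in hand. The only point that deserves a brief remark is that the bound on the middle-code distance $\rho_1$ from Proposition \ref{prop:GS} involves $\deg_{\psi_{j+2}}(x)$, which we do not control; however, this affects only the middle-code distance and not the global parameters $R,\delta$, so it does not enter the asymptotic inequality. One should note that the codes produced are asymptotically good (both $R$ and $\delta$ bounded away from zero for any fixed $\delta$ in a suitable range and $q_0$ large enough), and that the factor $((q_0-1)/q_0)^2$ is precisely the rate penalty incurred by imposing the two-level locality structure with $r_1=(q_0-1)^2$ arising from the tower.
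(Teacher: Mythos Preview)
Your proposal is correct and follows essentially the same route as the paper: form the combination $\frac{d}{n}+\frac{k}{n}\cdot\frac{q_0^2}{(q_0-1)^2}$ to cancel $\ell$, apply the Garcia--Stichtenoth genus bound $g_j\le n_j/(q_0-1)$ together with $n=q_0^2 n_j$, and use the identity $\frac{2(q_0-2)}{q_0^2-1}+\frac{1}{q_0-1}=\frac{3}{q_0+1}$ before letting $j\to\infty$. Your write-up in fact spells out a couple of steps (the Riemann--Roch inequality for $t$ and the arithmetic identity) that the paper leaves implicit in the displayed calculation \eqref{eq:SG}.
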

The bound \eqref{eq:ab} is obtained by letting $j\to\infty$ and passing to the limit in \eqref{eq:SG}.

To add flexibility to the parameters of the code family, we can decrease the maximum degrees of $x,y$ in the functions in \eqref{eq:V} from 
$s-1$ to $s_1-1$ and from $r_2-1$ to $r_2'-1$, where $2\le s',r'_2\le q_0-1.$ This gives the following extension of Proposition~\ref{prop:asymp}.
\begin{proposition}\label{prop:asymp1}
There exists a family of linear $q$-ary 2-level H-LRC codes with locality $$((r_1=r_2s,\rho_1),(r_2,\rho_2=q_0+1-r_2)),\quad 2\le s,r_2\le q_0-1$$ and
   $$
   R\ge\frac{sr_2}{q_0^2}\Big(1-\delta-\frac{q_0+s+r_2-1}{q_0^2-1}\Big).
   $$
\end{proposition}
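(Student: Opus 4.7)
The plan is to repeat the Garcia--Stichtenoth construction of Proposition~\ref{prop:GS} with a smaller monomial basis, then pass to the limit $j\to\infty$. Concretely, apply the construction of Section~\ref{sec:curves} to the subtower $X_{j+2}\xrightarrow{\phi_2}X_{j+1}\xrightarrow{\phi_1}X_j$ with primitive elements $y=x_{j+1}$ and $x=x_{j+2}$, but take
$$V=\operatorname{span}\{f_i\,y^{j'}x^k:1\le i\le t,\ 0\le j'\le s-1,\ 0\le k\le r_2-1\},$$
where $\{f_i\}$ is a basis of $L(\ell Q_{\infty,j})$ and $s,r_2\in\{2,\dots,q_0-1\}$ are free (rather than being fixed to $q_0-1$ by the covering degrees). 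A repair group is a fiber of $\phi_2$, of size $q_0$; on it $f_i$ and $y$ are constant while $x$ takes $q_0$ distinct values (by the totally-split condition of Prop.~\ref{prop:GS}). Hence the local code is an $r_2$-dimensional space of polynomials in $x$ of degree $\le r_2-1$ on $q_0$ points --- a $[q_0,r_2,q_0-r_2+1]$ Reed--Solomon code, giving $\rho_2=q_0+1-r_2$ and $r_1=sr_2$ for the middle code.

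For the global distance, the pole-counting behind \eqref{eq:d} goes through with the basis-exponent bounds $s-1$ and $r_2-1$ in place of $q_0-2$, while the covering-degree factors $(s+1)(r_2+1)=q_0^2$ stay the same. Using $\deg_X(y)=\deg_X(x)=q_0^{j+1}$, this yields
$$d\ge n-\ell q_0^2-q_0^{j+1}(s+r_2-2).$$
Together with the Riemann--Roch lower bound $k=tsr_2\ge(\ell-g_j+1)\,sr_2$ and $n=q_0^{j+1}(q_0^2-1)$, eliminating $\ell$ gives
$$\delta+R\cdot\frac{q_0^2}{sr_2}\ge 1-\frac{(g_j-1)q_0^2}{n}-\frac{s+r_2-2}{q_0^2-1}.$$

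The final step is to invoke the Garcia--Stichtenoth bound $g_j\le n_j/(q_0-1)=q_0^{j-1}(q_0+1)$ \cite{garcia95}, which gives $g_j q_0^2/n\le(q_0+1)/(q_0^2-1)=1/(q_0-1)$. Writing $1/(q_0-1)=(q_0+1)/(q_0^2-1)$ to combine the two loss terms produces
$$R\ge\frac{sr_2}{q_0^2}\Bigl(1-\delta-\frac{q_0+s+r_2-1}{q_0^2-1}\Bigr)$$
in the limit $j\to\infty$, which specializes to \eqref{eq:ab} of Prop.~\ref{prop:asymp} at $s=r_2=q_0-1$ as a consistency check. The calculation is essentially routine; the only point requiring care is the Reed--Solomon interpretation of the local codes (instead of the single-parity-check codes of the original construction), which is immediate because $r_2$ is strictly smaller than the covering degree $q_0$.
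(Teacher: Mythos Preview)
Your proof is correct and follows exactly the approach the paper indicates (the paper only provides a one-sentence hint—``decrease the maximum degrees of $x,y$ in the functions in \eqref{eq:V}''—and states the proposition without further argument). You have accurately filled in the routine pole-counting and asymptotic calculation, and your observation that the local code becomes a $[q_0,r_2,q_0-r_2+1]$ Reed--Solomon code (rather than a single-parity-check code) is precisely what yields $\rho_2=q_0+1-r_2$.
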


Observe that, while the code families in the previous two propositions are asymptotically good, the distance of the middle codes
$\rho_1$ does not have an explicit expression. This can be remedied by using the code family of Proposition \ref{prop:pow}, and performing a calculation similar to \eqref{eq:SG}. We obtain the following theorem which gives a fully explicit set of parameters for 
an asymptotically good family of H-LRC codes.
\begin{theorem} Let $q=q_0^2$  and suppose that $\nu:=(a+1)(b+1)|(q_0+1).$
There exists a family of linear $q$-ary 2-level H-LRC codes with locality 
  $(r_1=ab,\rho_1=a+3),(r_2=a,\rho_2=2))$ and the rate and relative distance satisfying the asymptotic bound
  \begin{equation}\label{eq:pa}
  R\ge\frac{ab}{(a+1)(b+1)}\Big(1-\delta-\frac{q_0+ab+b-1}{q_0^2-1}\Big).
  \end{equation}
  The $[\nu,r_1,\rho_1]$ middle codes in the construction are distance-optimal in that they satisfy the bound \eqref{eq:sb} with equality.
\end{theorem}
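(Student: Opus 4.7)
The plan is to apply Proposition~\ref{prop:pow} to the Garcia--Stichtenoth tower and then pass to the limit $j\to\infty$. Since $\nu:=(a+1)(b+1)$ divides $q_0+1$, there is a primitive $\nu$-th root of unity $\zeta\in\kk$, and the cyclic group $\langle\zeta\rangle$ acts on $X_j$ by $x_1\mapsto\zeta x_1$ while fixing $x_2,\dots,x_j$. The quotient is exactly the curve $X_{j,\nu}$ used in Proposition~\ref{prop:pow}. For each admissible choice of $\ell=\deg(Q_\infty)$, that proposition yields an $[n,k,d]$ code with $n=q_0^{j-1}(q_0^2-1)$, dimension $k\ge(\ell-g'+1)\,ab$ by Riemann--Roch on $X_{j,\nu}$ (of genus $g'$), and distance $d\ge n-\ell\nu-q_0^{j-1}(ab+b-2)$.

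Choosing $\ell$ as large as the distance bound permits and dividing through by $n$ gives
\[
R\;\ge\;\frac{ab}{\nu}\Big(1-\delta-\frac{ab+b-2}{q_0^2-1}\Big)\;-\;\frac{ab(g'-1)}{n}.
\]
The main step is to control $g'/n$ as $j\to\infty$. Because the covering $X_j\to X_{j,\nu}$ is tame (its degree $\nu\mid q_0+1$ is coprime to the characteristic) and cyclic, Riemann--Hurwitz gives $2g_j-2=\nu(2g'-2)+\deg(\mathfrak{R})$, with ramification supported only above $x_1=0$ and $x_1=\infty$. The number of such points on $X_j$ is $O(q_0^{j-1})$, so $\deg(\mathfrak{R})=O(q_0^{j-1})$, while $g_j/n\to 1/(q_0-1)$ by the Drinfeld--Vl\u adu\c t asymptotic for the GS tower. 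Therefore $g'/n\to 1/(\nu(q_0-1))$.

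Substituting this limit, using the identity $\tfrac{1}{q_0-1}=\tfrac{q_0+1}{q_0^2-1}$ to combine the two error terms, and noting $(ab+b-2)+(q_0+1)=q_0+ab+b-1$, yields exactly the claimed bound~\eqref{eq:pa}. The optimality of the middle $[\nu,ab,a+3]$ LRC codes follows immediately from \eqref{eq:sb}: with locality $(r_2,\rho_2)=(a,2)$, we have $\lceil r_1/r_2\rceil=b$, and the right-hand side of \eqref{eq:sb} equals $\nu-ab+1-(b-1)=a+3=\rho_1$. The only genuine technical point is the Riemann--Hurwitz estimate of $g'$ in terms of $g_j$, and there the bounded $O(q_0^{j-1})$ contribution from the ramification divisor is negligible compared to $g_j\sim q_0^{j}/(q_0-1)$, so no further work beyond routine verification is needed.
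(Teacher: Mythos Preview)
Your approach is essentially the same as the paper's: invoke Proposition~\ref{prop:pow}, eliminate $\ell$ between the dimension and distance estimates, bound the genus $g'$ of the quotient curve in terms of $g_j$ via Riemann--Hurwitz, and use $g_j/n\to 1/(q_0-1)$ to pass to the limit. The optimality check for the middle codes is also the same.

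There is one point where the paper's argument is both simpler and more precise than yours. You estimate $\deg(\mathfrak R)=O(q_0^{\,j-1})$ and claim this is ``negligible compared to $g_j$.'' But $q_0$ is fixed, $g_j\sim q_0^{\,j-1}(q_0+1)$, and $n=q_0^{\,j-1}(q_0^2-1)$, so $\deg(\mathfrak R)/n$ tends to a nonzero constant, not to $0$; in particular your asserted limit $g'/n\to 1/(\nu(q_0-1))$ is not justified as an equality. The paper sidesteps this entirely: since $\deg(\mathfrak R)\ge 0$, Riemann--Hurwitz immediately gives the inequality $g_j-1\ge \nu(g'-1)$, hence
\[
\frac{ab(g'-1)}{n}\;\le\;\frac{ab}{\nu}\cdot\frac{g_j-1}{n}\;\longrightarrow\;\frac{ab}{\nu(q_0-1)},
\]
which is exactly what is needed for the lower bound on $R$. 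Your conclusion survives only because the ramification correction has the favorable sign (it can only decrease $g'$ and hence improve the bound), so the slip is harmless for the stated theorem, but you should replace the ``negligible'' claim by the one-line inequality above.
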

\begin{proof}
From Proposition \ref{prop:pow} we obtain:
\begin{align}
\frac dn+\frac kn\frac{(a+1)(b+1)}{ab}\ge 1-\frac{ab+b-2}{q_0^2-1}-\frac{(g_{Z}-1)(a+1)(b+1)}n \label{eq:dn}
\end{align}
where $g_Z$ is the genus of the curve $X_{j,(a+1)(b+1)}.$ Recalling the Riemann-Hurwitz formula \cite[p.102]{TVN07}, we obtain the 
relation $g_j\ge1+(a+1)(b+1)(g_Z-1),$ which gives
  $$
  \frac{(g_{Z}-1)(a+1)(b+1)}n\le \frac{g_j-1}n
  $$
Substituting this in \eqref{eq:dn}, we continue as follows:
 \begin{align*}
\frac dn+\frac kn\frac{(a+1)(b+1)}{ab}&\ge 1-\frac{ab+b-2}{q_0^2-1}-\frac{g_j-1}n
 \end{align*}
 Since $\frac {g_j}n\to \frac 1{q_0-1},$ we obtain \eqref{eq:pa} upon rearranging.
\end{proof}

To get an idea of the bound \eqref{eq:pa}, assume that $a=b\approx q_0.$ Assuming large $q_0$ and ignoring small terms, we find
that the right-hand side of \eqref{eq:pa} is approximately $1-\delta-\frac{2}{\sqrt{q}}$ and is in fact better than the bound 
\eqref{eq:ab}.

\subsection{A random coding argument}
As in \cite{BTV17}, let us also compute a bound on the set of achievable pairs $(R,\delta)$ obtained by a random coding argument, calling
it a Gilbert-Varshamov (GV) type bound.
Consider a sequence of $q$-ary H-LRC codes $\cC^{(i)}$ of length
$n_i$ with locality $((r_1,\rho_1),(r_2,\rho_2))$. Suppose that $d_i$ is the distance of the code $\cC^{(i)}$ and let $\frac{d_i}{n_i}\to \delta$ as $i\to\infty.$

    \begin{proposition}\label{prop:GV2}{\sc(GV bound)} Assume that there exists a $q$-ary $[\nu,r_1,\rho_1]$ linear LRC code $\cD$ with locality $(r_2,\rho_2)$ and let $B_\cD(s)$ be the weight enumerator of the code $\cD$. For any $R>0,\delta>0$ that satisfy the inequality
\begin{equation}\label{eq:GV2}
  R< \frac {r_1}{\nu}-\min_{s>0}\Big(\frac 1{\nu}\log_q B_\cD(s)-\delta\log_qs\Big),
\end{equation}
there exists a sequence of H-LRC codes with asymptotic rate $R$ and relative distance $\delta.$
\end{proposition}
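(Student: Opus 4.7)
The plan is to construct the required codes as random $k$-dimensional linear subspaces of the parent code $\cE := \cD^{\oplus N}$, where $N := n/\nu$ and $k := \lceil Rn\rceil$. The first step is to check that $\cE$ itself is an H-LRC code with the target locality $((r_1,\rho_1),(r_2,\rho_2))$: its coordinate set splits into $N$ blocks of size $\nu$, on each of which the restriction equals $\cD$, and by hypothesis $\cD$ is a $[\nu,r_1,\rho_1]$ LRC code with inner locality $(r_2,\rho_2)$. More importantly, any linear subcode $\cC \subseteq \cE$ inherits this structure: for every coordinate $i$ lying in a block $B$, the middle code $\cC|_B$ is a subcode of $\cD$, hence has dimension at most $r_1$, minimum distance at least $\rho_1$, and retains the inner $(r_2,\rho_2)$ LRC property because the recovering sets of $\cD$ still recover each coordinate within the subcode. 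Thus every subcode of $\cE$ automatically satisfies the conditions of Definition~\ref{def:hie}, and the only thing left to engineer is a large minimum distance at the global level.

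Next I would carry out a standard random-coding count inside $\cE$, whose $\kk$-dimension is $Nr_1$. Drawing $\cC$ uniformly at random among $k$-dimensional subspaces of $\cE$, any fixed nonzero $v \in \cE$ lies in $\cC$ with probability at most $q^{k - Nr_1}$, so if $A_w$ denotes the number of weight-$w$ vectors in $\cE$, the expected number of nonzero codewords of $\cC$ of weight at most $\delta n$ is bounded by $q^{k - Nr_1}\sum_{w=1}^{\lfloor \delta n\rfloor} A_w$. Because $\cE$ is a direct sum of copies of $\cD$, its weight enumerator factors as $\sum_w A_w s^w = B_\cD(s)^N$, and the usual Chernoff-type bound gives $\sum_{w \le \delta n} A_w \le s^{-\delta n}\,B_\cD(s)^N$ for any $s \in (0,1]$. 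Combining the two estimates, the expected count is strictly less than $1$ whenever
$$\frac{k}{n} \;<\; \frac{r_1}{\nu} \,+\, \delta \log_q s \,-\, \frac{1}{\nu}\log_q B_\cD(s).$$
Optimizing over $s > 0$ on the right reproduces exactly the condition \eqref{eq:GV2}, so under that hypothesis a realization $\cC$ of dimension $k$ containing no nonzero codeword of weight $\le \delta n$ exists with positive probability.

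The only remaining technicality is the support requirement $i \in \supp(\cC_i)$ in Definition~\ref{def:hie}. For a random $k$-dimensional subcode the probability that a fixed coordinate is identically zero on $\cC$ is of order $q^{-k}$, so the expected number of dead coordinates is $o(1)$ and one may either tighten the probabilistic argument with a union bound or simply puncture them, without altering the asymptotic rate or relative distance; after puncturing, each block remains a subcode of $\cD$, so the hierarchical locality survives. Letting $n \to \infty$ along multiples of $\nu$ then furnishes an asymptotically good H-LRC family of rate $R$ and relative distance at least $\delta$. The principal conceptual point of the argument is the first step, namely that the hierarchical locality structure descends to arbitrary linear subcodes; the random-coding computation itself is entirely standard and the optimization over $s$ is precisely the Chernoff optimization that converts the weight enumerator of $\cD$ into the GV-type bound \eqref{eq:GV2}.
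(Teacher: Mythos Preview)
Your proof is correct and follows essentially the same route as the paper's: the paper phrases the construction via a parity-check matrix $H=(H_1\mid H_0)^T$ with $H_1$ block-diagonal (forcing the code into $\cD^{\oplus N}$) and $H_0$ uniformly random, which is the dual description of your ``random $k$-dimensional subspace of $\cE=\cD^{\oplus N}$''; the Chernoff/union-bound computation with the factorization $\sum_w A_w s^w = B_\cD(s)^{N}$ is then identical. Your explicit verification that hierarchical locality descends to arbitrary linear subcodes, and your handling of the support condition $i\in\supp(\cC_i)$, are points the paper leaves implicit.
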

\begin{proof} The ideas in the following calculation extend the approach to a Gilbert-Varshamov bound for LRC codes derived in \cite{TamoBargEtAl2016,BTV17}, so we only outline the argument. Let $\cC$ be an $[n,k=Rn,d=\delta n]$ linear H-LRC code with locality parameters $\bfr=((r_1,\rho_1),(r_2,\rho_2))$ as given in Def.~\ref{def:hie}. Its parity-check matrix can be taken in the form $H=(H_1|H_0)^T,$ where the submatrices are as follows. The part $H_1$ is a block-diagonal matrix with blocks given by the parity-check matrix of the code $\cD.$ The matrix $H_0$ is formed of random uniform independent
elements of the field $\ff_q$ chosen independently of each other. The matrix $H_1$ contains $n({\nu}-{r_1})/{\nu}$ rows and
the matrix $H_0$ contains $n\frac {r_1}{\nu}-k$ rows.

The number of vectors of weight $w=1,\dots,n$ in the null space of $H_1$ is given by $\min_{s>0}s^{-w}B_\cD(s)^{n/{\nu}}$, and the probability that each of them is also in the null
space of $H_0$ is $q^{-n(\frac {r_1}{\nu}-R)}.$ By the union bound, 
   $$
   P(d_{\text{min}}(\cC)\le\delta n)=\delta n q^{-n(\frac {r_1}{\nu}-R)}\min_{s>0}s^{-w}B_\cD(s)^{n/{\nu}}.
   $$
If this probability is less than one, there exist codes with distance $d_{\text{min}}\ge \delta n.$ Upon taking logarithms, we now obtain \eqref{eq:GV2}.
\end{proof}

Numerical comparison of the bounds obtained above, including \eqref{eq:pa} and \eqref{eq:ab}, with the GV bound is difficult because
\eqref{eq:GV2} is not easy to compute. Indeed, we need to find the weight distribution of the code $\cD$ (for instance, a
code in the family constructed in \cite{tamobarg}, see Sec.~\ref{sect:TBcodes}); however this is not easy even for moderate values of $q_0$. 
It is possible to replace \eqref{eq:GV2} with a weaker bound by observing that the codes of \cite{tamobarg} are subcodes of certain
Reed-Solomon codes (more specifically, a $q$-ary $[\nu,r_1,\rho_1]$ code $\cD$ is a subcode of the $[\nu,\nu-\rho_1+1,\rho_1]$ RS code), and  therefore, their
weight distributions are bounded above by the weight distribution of RS codes for which an explicit expression is available. 
Thus, we can use this expression to evaluate a lower estimate for the right-hand side of \eqref{eq:GV2}.
Following this route, we have computed numerical examples, observing that \eqref{eq:pa} indeed improves upon this version of the GV bound. One such example is as follows.

Let $q_0=19,a=3,b=4,$ then $\nu=20,r_1=12,\rho_1=6.$ Using the weight numerator of the $[20,15,6]$ RS code over $\ff_{19^2}$ on the right-hand side of \eqref{eq:GV2}, we find that rate $R=0.198$ is attainable for the relative distance $\delta=0.5.$ For the same $\delta$ the bound \eqref{eq:pa} produces a higher value $R\ge 0.243.$ 

We note again that this example does not imply that the bound \eqref{eq:pa} improves upon the actual GV bound which even for the above parameters is not easily computable.

%

\bibliographystyle{IEEEtranS}
\bibliography{main}
\end{document}